\newtheorem{corollary}{Corollary}
\newtheorem{theorem}{Theorem}
\newtheorem{lemma}{Lemma}
\newtheorem{remark}{Remark}
\newtheorem{definition}{Definition}
\newcommand{\ba}{\mathbf{a}}
\newcommand{\bA}{\mathbf{A}}
\newcommand{\bg}{\mathbf{g}}
\newcommand{\bG}{\mathbf{G}}
\newcommand{\bx}{\mathbf{x}}
\newcommand{\by}{\mathbf{y}}
\newcommand{\bz}{\mathbf{z}}
\begin{document}
\title{Signal Reconstruction from Interferometric Measurements under Sensing Constraints}

\author{Davood~Mardani$^\dagger$, George~K.~Atia$^\dagger$ and Ayman F. Abouraddy$^\ddagger$
\thanks{This material is based upon work supported by the U.S. Office of Naval Research contract N00014-14-1-0260, NSF CAREER Award CCF-1552497, and NSF grant No. CCF-1320547.

$\dagger$ D. Mardani and G. K. Atia are with the Department of Electrical and Computer Engineering, University of Central Florida, e-mails: d.mardani@knights.ucf.edu, george.atia@ucf.edu.

$\ddagger$ A. F. Abouraddy is with CREOL, The College of Optics \& Photonics, University of Central Florida, email: raddy@creol.ucf.edu.}
}


\maketitle

\begin{abstract}
This paper develops a unifying framework for signal reconstruction from interferometric measurements that is broadly applicable to various applications of interferometry. In this framework, the problem of signal reconstruction in interferometry amounts to one of basis analysis. Its applicability is shown to extend beyond conventional temporal interferometry -- which leverages the relative delay between the two arms of an interferometer -- to arbitrary degrees of freedom of the input signal. This allows for reconstruction of signals supported in other domains (e.g., spatial) with no modification to the underlying structure except for replacing the standard temporal delay with a generalized delay, that is, a practically realizable unitary transformation for which the basis elements are eigenfunctions. Under the proposed model, the interferometric measurements are shown to be linear in the basis coefficients, thereby enabling efficient and fast recovery of the desired information. While the corresponding linear transformation has only a limited number of degrees of freedom set by the structure of the interferometer giving rise to a highly constrained sensing structure, we show that the problem of signal recovery from such measurements can still be carried out compressively. This signifies significant reduction in sample complexity without introducing any additional randomization as is typically done in prior work leveraging compressive sensing techniques. We provide performance guarantees under constrained sensing by proving that the transformation satisfies sufficient conditions for successful reconstruction of sparse signals using concentration arguments. We showcase the effectiveness of the proposed approach using simulation results, as well as actual experimental results in the context of optical modal analysis of spatial beams.
\end{abstract}

\section{Introduction}\label{sec:intro} 
Interferometry is a measurement strategy that is widely used across all the physical sciences, with applications ranging from astronomy and radio interferometry \cite{Pan2017_radio, Thompson2007_radiobook}, to remote sensing and Interferometric Synthetic Aperture Radar (InSAR) \cite{Rechards2007_SAR, Ferretti2007_SARbook, SARsignalmagazine}, optics and photonics \cite{Born1999_principleoptics, Monnier2003_optics, Abouraddy11OL,Kurien:14}, signal processing and communications \cite{Angleestimate, localization, imagesignalmagazine}, optical encryption \cite{Li15SCIrep,Rawat15AP}, and bio-imaging \cite{OCTprinciple, baha_OCT}. Underlying the utility of interferometry in all these fields is the fundamental principle of superposition of linear waves, which applies to optical, radio-frequency, and acoustic waves, among other physical realizations. By judiciously superposing two versions of a wave, their interference may reveal sought-after information, typical about a sample or a medium that one of the waves scattered from. The interferometer in which the superposition takes place may be an instrument implemented using electrical and optical components (e.g., a Michelson interferometer in Optical Coherence Tomography (OCT) \cite{OCTprinciple}), or simply a physical medium (e.g., the atmosphere in the case of localization in wireless sensor networks \cite{localization, localizationexample}).

Common to all such problems are interferometric measurements, so-called \emph{interferograms}, obtained by acquiring the energy of the superposition of the two waves or signals while some parameter is swept \cite{interferometrystellar, Born1999_principleoptics}. The interferogram typically assumes values related to the auto-correlation and cross-correlation of the signals in the interferometer, which depend on the characteristics of its arms (e.g., their physical lengths in temporal interferometry). For example, in time domain OCT one can acquire several interferometric measurements by sweeping the time delay in one of the interferometer arms \cite{OCT:book}.

This paper provides a fresh perspective on the problem of signal reconstruction from interferometric measurements. In particular, we propose a unifying framework in which the problem of signal recovery from interferograms amounts to basis analysis in some appropriate Hilbert space. For example, in temporal interferometry this space is the span of the set of complex harmonics. We show here how this concept generalizes to other bases enabling interferometry in a variety of bases related to any degree (or degrees) of freedom of the wave. In our model, we show that the interferometric measurements are linear in the information of interest (embedded in the expansion coefficients in the basis), thereby enabling more efficient and faster algorithms for recovery of the desired information. Although the measurement model has only few degrees of freedom set by the structure of the interferometer, we show both analytically and experimentally that the problem of basis analysis is amenable to compressive reconstruction whereby significant reductions in sample and computational complexity can be achieved given prevalent sparse representations in that basis. We show that the linear transformation of the measurement model satisfies sufficient conditions for successful reconstruction even under the sensing constraints set by the limited degrees of freedom of the interferometer.

{\color{black} We emphasize the generality of our approach by first giving an overview of sample applications of interferometry that can be studied in light of the proposed framework as we further elaborate in Section \ref{ssec:applications} and Section \ref{sec:generalmodel}. 
OCT is a non-invasive and contact-free optical imaging method which provides high-resolution depth and transversal images from different layers of a sample object \cite{OCT_science,OCTprinciple}, and is a heavily used bio-imaging technique in ophthalmology to capture high resolution cross-sectional images of the retina \cite{OCT_retina}. 
In OCT, a low-coherence source emits a light beam that scatters off a sample object such as living tissue as shown in Fig. \ref{fig:combination_apps}(a). The scattered light is then combined with a delayed version of the input beam to reveal the depth information of the object \cite{OCT:book}. In this example, the path which has the sample object corresponds to one arm of the interferometer whose reflectivity indices at the different layers are of interest. In the proposed framework, we show that the reflectivity indices appear in expansion coefficients related to the interferometric measurements. A second example is that of optical modal analysis \cite{Abouraddy11OL,Abouraddy12OL,Mardani15OE}, in which measurements collected using an optical interferometer -- such as the Mach–-Zehnder interferometer of Fig. \ref{fig:combination_apps}(b) -- are used to reveal the modal content of an optical beam. This example is studied in detail in Section \ref{sec:results}. Another example pertains to localization in wireless sensor networks. As shown in Fig. \ref{fig:combination_apps}(c), to determine position, a node receives two signals with two different frequencies from two adjacent anchor nodes. By synchronizing the receiver and transmitters, the delay of each path defines the distance between the nodes. Hence, the position information is embedded in the energy of samples of the combined signal \cite{localization}.}

\subsection{Related work}\label{works}
\subsubsection{Signal recovery from intensity measurements}
OCT \cite{OCTprinciple}, angle-of-arrival estimation \cite{Angleestimate}, and localization in wireless sensor networks \cite{localization} are examples of applications of interferometry in which the information of interest relates to the structure of the interferometer -- for example the path delay or equivalently the length of each arm. 
Another class of problems in interferometry are those in which the interferometric measurements are used to reveal the properties of the input signal, such as in the above-mentioned optical modal analysis example to reveal the modal content of light beams \cite{Abouraddy11OL,Abouraddy12OL}. In high-rate optical communications, the data modulates different temporal or spatial modes of a light beam \cite{Wang12NP}, then modal analysis can be used to decode the data at the receiver's side. 

Phase retrieval refers to another important class of problems with broad interest in which one seeks to recover a complex signal given only amplitude or intensity measurements \cite{Fienup_phase, Eldar_phase, Pal_ICASSP, Pal_GLOBAL}. This is to be contrasted to the interferometric setup we consider herein where signal recovery is intended from correlation-type measurements. In \cite{Phaselift}, the authors devise an approach termed PhaseLift to recover the signal of interest by searching for a rank-one solution of a formulated matrix recovery problem. Leveraging the quadratic measurement model, the idea is to reformulate the problem of signal recovery from quadratic constraints in terms of rank minimization via semidefinite programming. The interferometry problem considered in this paper is not amenable to similar formulations given the constrained sensing structure imposed by the architecture of the interferometer and the underlying measurement model.

\subsubsection{Compressive techniques}
In dealing with the practical limitations of sensing systems, prior work on compressive signal recovery has relied on introducing additional (non-native) hardware components to emulate randomization. For example, for optical field estimation and imaging, the field is projected onto a sequence of random masks inserted along the field path in \cite{Mirhosseini14PRL, Howland14PRL}. Similarly, the single pixel camera \cite{Duarte08SPM,Martinez-Leon:17,Clemente:13} uses a time-varying random mask to acquire random projections of a scene instead of directly collecting the pixels/voxels using a large size detector. Random masks in the form of a Digital Micromirror Device (DMD) (an array of millions of individually addressable and tiltable mirror-pixels) are also utilized in optical encryption for secure communication in optical networks to compress the encrypted data prior to transmission \cite{Li15SCIrep,Rawat15AP}. In sharp contrast, this paper proposes a compressive method for signal reconstruction from interferometric measurements acquired \emph{without modifying the underlying sensing architecture.}

Our work also departs from related literature on compressive signal recovery in interferometry such as \cite{baha_OCT,Liu:10} in which compression is restricted to the recovery step post data collection. By contrast, our approach uses compressive data acquisition/sensing so that we collect less interferometric measurements in the first place for subsequent reconstruction. This enables savings in both the data acquisition and signal recovery steps.  

\subsection{Contributions}\label{contribution}
In the following, we summarize the main contributions of this paper.
\smallbreak
\noindent$\bullet$ \emph{A unifying framework for temporal interferometry (basis analysis)}: 
We propose a unifying framework in which every problem of temporal interferometry is mapped to one of basis analysis -- the information of interest is embedded in the basis coefficients. The interferogram admits an explicit linear representation of known structure in the desired information about the input signal or some sample object\footnote{Common interferometry models express the measurements at the level of correlation terms which are not explicit in the information of interest.}. This problem-independent generalized linear model enables faster algorithms for signal reconstruction from interferograms. 
\begin{figure*}[t!]
	\centering
	\includegraphics[scale=.9]{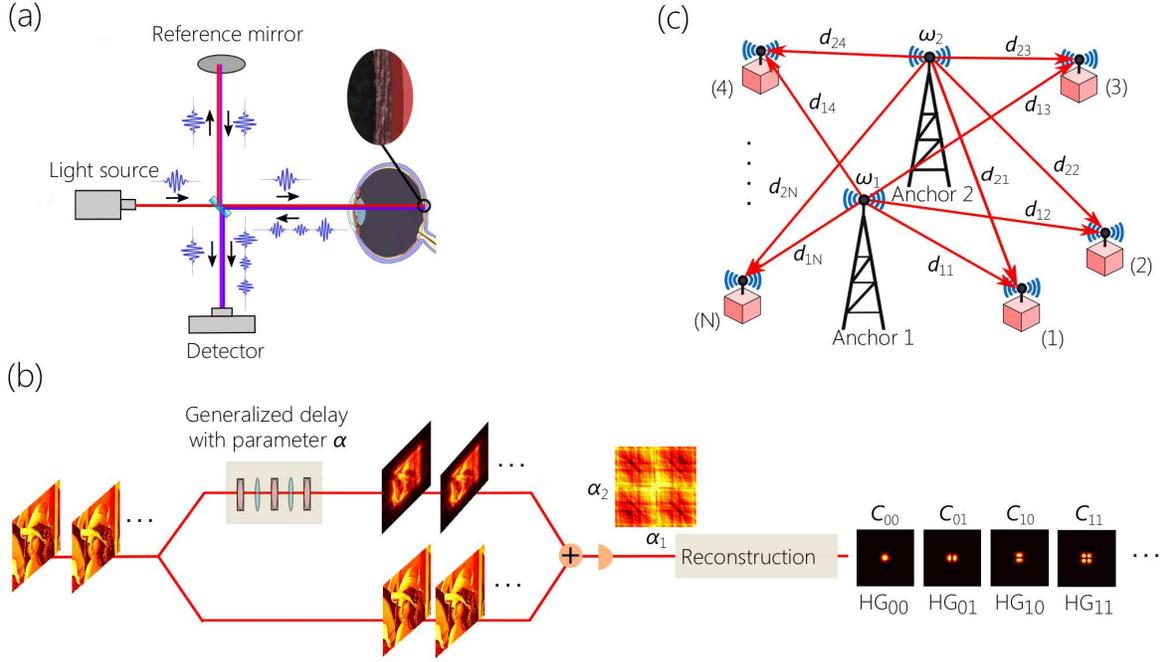}
	\vspace{-10pt}\caption{(a) Schematic of the OCT implementation using a Michelson interferometer. The reference beam is combined with the beam reflected from different layers of a sample object. The intensity measurements collected by the detector are used to retrieve reflectivity indices of different layers of the sample.
    (b) Two-dimensional modal analysis using the generalized interferometry approach in Section \ref{sec:generalmodel} based on Hermite-Gaussian modes. The intensity profile of a scene is passed through two cascaded fractional Fourier transforms with parameters $\alpha_1$ and $\alpha_2\!\in\![0,2\pi]$. The output is superposed with the input beam to produce an interferogram from which the modal content of the beam is revealed using an appropriate reconstruction algorithm. (c) Topology of a wireless sensor network used for node localization. Two anchor nodes transmit two sinusoids with different frequencies. The distances between receiver nodes and the anchor nodes are determined intensity measurements of the superposed signals.  }
	\label{fig:combination_apps}
\end{figure*}

\noindent$\bullet$ \emph{Generalization beyond the temporal degree of freedom}: The basis analysis framework is shown to generalize to other degrees of freedom beyond the temporal, such as the spatial parameters. The key idea underlying our ability to make such generalization is replacing the standard time delay commonly used in temporal interferometry with suitable unitary transformations (for which the elements of the basis are eigenfunctions) to allow for analysis in arbitrary bases for the other degrees of freedom, hence the designation `generalized delays'. 
This game-changing result allows us to perform (generalized) interferometry in arbitrary degrees of freedom with no modification to the interferometer structure, opening up new venues for signal recovery from interferograms. 

\noindent$\bullet$ \emph{Compressive interferometry}: The linear model for basis analysis enables compressive techniques for data acquisition and signal recovery when the signals admit sparse representations in some appropriate basis. In sharp contrast to prior work, we do not modify the underlying interferometer structure. Therefore, compressive signal reconstruction is carried out under sensing constraints set by the limited degrees of freedom of the interferometer. Despite the constrained sensing structure, we show using concentration arguments that the linear transformation satisfies sufficient conditions for successful sparse recovery, such as the Restricted Isometry Property (RIP) \cite{Candes08CR}, and the isotropy and incoherence properties for random ensembles \cite{candesRIPless}. As such, we are able to provide performance guarantees for signal recovery under sensing constraints using native interferometric setups. Furthermore, we demonstrate the proposed compressive generalized interferometry approach in the context of optical modal analysis in a Hermite-Gaussian Basis by realizing an actual generalized delay in the spatial degree of freedom  -- in this case, a fractional Fourier transform -- both in numerical simulations and laboratory experiments.   
\\
\\
\noindent\textbf{Paper organization:} The rest of the paper is organized as follows. In Section \ref{sec:interferometry}, we propose the basis analysis framework for two-path temporal interferometry, and show how various examples from different applications are mapped to the corresponding generalized linear measurement model. In Section \ref{sec:generalmodel}, we introduce the concept of generalized interferometry where we generalize this framework to arbitrary degrees of freedom of the input signal. In Section \ref{sec: CS_practical constraint}, we develop a compressive approach for reconstruction of sparse signals from interferometric measurements under sensing constraints and establish performance guarantees using mathematical analysis. In Section \ref{sec:results}, we provide simulation and experimental results in the context of optical modal analysis demonstrating the proposed compressive approach in generalized interferometry. 

\textit{Notation:} The superscripts $^{\mathrm{T}}$ and $^{\mathrm{H}}$ denote the transpose and the conjugate transpose operators, respectively. The bracket notation $[a_{n}]$ with index variable $n$ denotes a column vector indexed by $n$. Similarly, the notation $[a_{nm}]$ denotes a matrix with rows and columns indexed by $n$ and $m$, respectively. An orthonormal basis with elements $\phi_n$ indexed by $n$ for a given degree of freedom $x$ is denoted $\{\phi_n(x)\}$.

\section{A unifying framework for interferometry}
\label{sec:interferometry}

\subsection{Interferogram model}\label{ssec:intro_interf}

A generic interferometric configuration is depicted schematically in Fig.~\ref{fig:interferometer}. An input signal or optical field $\psi(t)$, where $t$ corresponds to time, is divided into two paths (or interferometer arms), whereupon two new versions $\psi_{1}(t;\tau)$ and $\psi_{2}(t)$ are created and combined to produce a superposed signal,
\begin{equation}\label{eq:summation}
\psi_{\mathrm{s}}(t;\tau)=\psi_{1}(t;\tau)+\psi_{2}(t).
\end{equation}
The first arm (referred to as the `reference' arm) has an impulse response $h_1(t;\tau)\!=\!\delta(t-\tau)$ where $\tau$ is a temporal delay, and the second arm (the `sample' arm) has an impulse response $h_2(t)$. An `interferogram' is traced by scanning over the values of $\tau$ and recording the energy of the superposed signal $I(\tau)\!=\!\langle|\psi_{\mathrm{s}}(t;\tau)|^{2}\rangle$, where $\langle\cdot\rangle$ corresponds to an integration over time. The interferogram is thus given by
\begin{equation}\label{eq:interferogram3}
I(\tau)\!=\!I_{1}+I_{2}\!+\!2|I_{12}(\tau)|\cos(\theta_{12}(\tau)).
\end{equation}
The first two terms on the right hand side of (\ref{eq:interferogram3}) represent the auto-correlation of the signals produced in each arm of the interferometer (the total energy of the signal in each arm),
\begin{equation}\label{eq:def_auto_corr}
I_{1}\!\triangleq\!\langle\psi_{1}(t;\tau)\psi_{1}^{*}(t;\tau)\rangle,\,
I_{2}\!\triangleq\!\langle\psi_{2}(t)\psi_{2}^{*}(t)\rangle,
\end{equation}
whereas the third term captures their cross-correlation,
\begin{equation}\label{eq:cross_corr}
I_{12}(\tau)\triangleq\langle\psi_{1}(t;\tau)\psi_{2}^{*}(t)\rangle,
\end{equation}
and $\theta_{12}(\tau)$ is the phase of $I_{12}(\tau)$. Interferometric measurements are collected by sampling the delay $\tau$, and the sought-after information about the input signal or the interferometer arms is typically embedded in the cross-correlation term.

\begin{figure}[b]
	\centering
	\includegraphics[scale=.9]{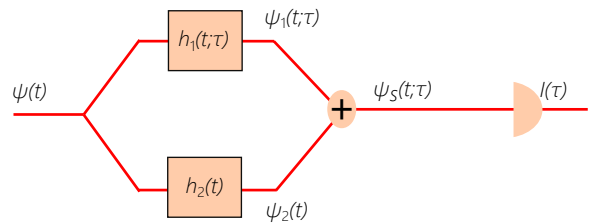}
	\caption{Schematic for a general two-path interferometer. The output signal of the reference arm is defined by the temporal delay $\tau$.}
	\label{fig:interferometer}
\end{figure}

Although (\ref{eq:interferogram3}) provides a general model for interferometry that is commonly used, it unfortunately does not show explicitly how the correlation term relates to the information of interest. In the next subsection, we modify this traditional model of temporal interferometry (Fig. \ref{fig:interferometer}) such that it is mapped to a problem of basis analysis. It will be shown that the interferogram is linearly related to the information of interest, whether this information pertains to the input signal or to the `sample'.

\subsection{Basis analysis -- linear measurement model}
\label{sec:basis_analysis}
We expand the input signal $\psi(t)$ in terms of harmonics or complex exponentials $\{e^{j\omega t}\}$, where $\omega$ is the angular frequency (i.e., the Fourier basis), such that $\psi(t)\!=\!\tfrac{1}{2\pi}\int_{-\infty}^{+\infty}\Psi(\omega)e^{j\omega t}d\omega$, where $\Psi(\omega)$ is the Fourier transform (FT). Hereon, we focus our attention on discrete bases by the mere fact that the data collected and the information retrieved is always represented discretely. In this case, the input signal is represented as $\psi(t)\!=\!\sum_{n=1}^{\infty}c_{n}e^{j\omega_{n}t}$ using the orthogonal discrete harmonics $\{e^{j\omega_{n}t}\}$ for some complex coefficients $c_n, n=1,2,\ldots$. Because of the discrete basis, the signal is periodic in time, so that all integrals over time extend over this period. The delay introduces a phase factor $e^{-j\omega_n\tau}$ to the coefficient $c_n$ that is linear in $\tau$ and the modal `index' $\omega_n$, 
\begin{equation}\label{eq:arm1_output}
\psi_{1}(t;\tau)=\psi(t-\tau)=\sum_{n=1}^{\infty}c_{n}e^{j\omega_{n}t}e^{-j\omega_{n}\tau}.
\end{equation}
This fact will be utilized subsequently when introducing the notion of a `generalized delay' for non-temporal degrees of freedom. 

Modeling the sample arm of the interferometer as a linear time-invariant system $h_2(t)$, its output will be
\begin{equation}\label{eq:arm2_output}
\psi_{2}(t)\!=\!\sum_{n=1}^{\infty}d_{n}e^{j\omega_{n}t}, 
\end{equation}
where $d_{n}\!=\!c_{n}H_{2}(\omega_{n})$, $n\!=\!1,2,\ldots$, and $H_{2}$ is the Fourier transform of $h_{2}$. From (\ref{eq:interferogram3}) and the orthogonality of the complex harmonics, the interferogram becomes
\begin{equation}\label{eq:interferogram_exponential}
I(\tau)\!=\!\sum_{n=1}^{\infty}|c_{n}|^{2}\!+\!\sum_{n=1}^{\infty}|d_{n}|^{2}\!+\!2\sum_{n=1}^{\infty}|c_{n}||d_{n}|\cos(\omega_{n}\tau\!+\!\theta_{n}),
\end{equation}
where $\theta_{n}$ is the phase of $c_{n}d_{n}^{*}$. As the first two terms do not depend on $\tau$, we define the interferometric measurements as,
\begin{equation}\label{eq:define_linearmeasurements}
\begin{split}
y(\tau)\triangleq\frac{1}{2}\left\{I(\tau)-\sum_{n=1}^{\infty}\left(|c_{n}|^{2}+|d_{n}|^{2}\right)\right\}
=\sum_{n=1}^{\infty}|c_{n}||d_{n}|\cos(\omega_{n}\tau+\theta_{n}).
\end{split}
\end{equation}
To collect $M$ interferometric measurements, we sample $M$ values $\tau_{m}$, $m\!=\!1,\ldots,M$, of the delay $\tau$.

\begin{figure*}[t]
	\centering
	\includegraphics[scale=.9]{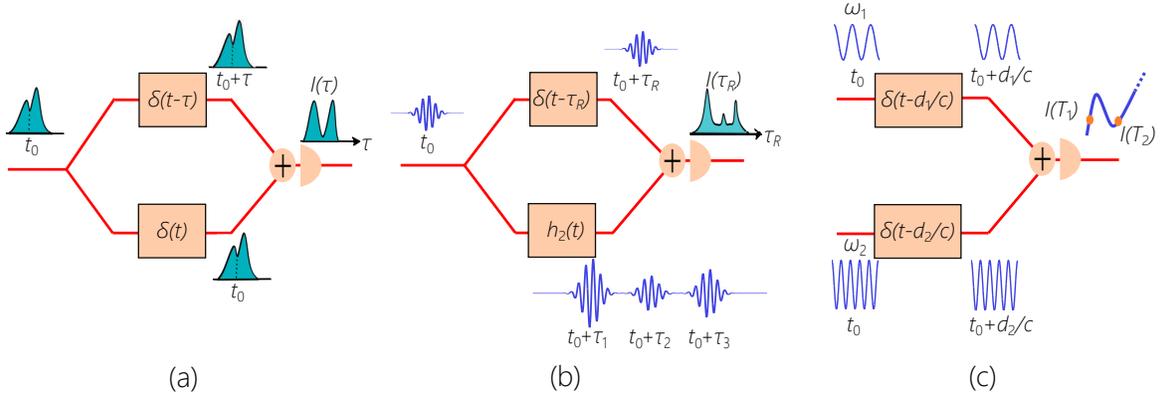}
	\vspace{-10pt}\caption{(a) Block diagram of the interferometry-based modal analysis. (b) Block diagram of OCT where the reference arm is modeled by a delay block, and the sample arm is modeled by an LTI system.(c) Block diagram of an interferometric based localization technique. }
	\label{fig:combination_block}
\end{figure*}

In the framework of temporal interferometry described here and in all subsequent modalities, only a finite number of coefficients $c_{n}$ and $d_{n}$ are of interest or even accessible by the acquisition systems. We thus introduce at this point a finite dimensionality $N$ for the harmonic basis, $\{e^{j\omega_{n}t}\}_{n=1}^{N}$. Considering (\ref{eq:define_linearmeasurements}), we therefore obtain a linear model for the interferometric measurements in the time domain 
\begin{equation}\label{eq:interferogram_linearmodel}
\mathbf{y}=\mathbf{A}\mathbf{x},
\end{equation}
where the $M\!\times\!1$ measurement vector $\mathbf{y}$ contains the interferometric measurements $y(\tau_{m})$, the $2N\!\times\!1$ information vector $\mathbf{x}\!=\![\mathbf{x}^{\mathrm{T}}_{1} \,\,\,\mathbf{x}^{\mathrm{T}}_{2}]^{\mathrm{T}}$ consists of the two vectors $\mathbf{x}_{1}\!=\![|c_{n}||d_{n}|\cos(\theta_{n})]$ and $\mathbf{x}_{2}\!=\![|c_{n}||d_{n}|\sin(\theta_{n})]$, and the $M\!\times\!2N$ matrix $\mathbf{A}\!=\![\mathbf{A}_{1}\, \,\,\mathbf{A}_{2}]$ is a block matrix with $\mathbf{A}_{1}\!=\![\cos(\omega_n\tau_m)]$, and $\mathbf{A}_{2}\!=\![-\sin(\omega_n\tau_m)]$, with $m\!=\!1,2,\ldots,M$ and $n\!=\!1,2,\ldots,N$. 

The goal here is to recover some properties of the input signal (the coefficients $c_{n}$) or of the sample (the coefficients $d_{n}$) from the interferogram. The interferogram model (\ref{eq:interferogram_linearmodel}) offers an immediate advantage. We have reduced \textit{every} problem of temporal interferometry with the configuration in Fig.~\ref{fig:interferometer} to one of basis analysis. This is a unifying problem-independent framework in which the interferometric measurements admit a linear representation in terms of a matrix $\mathbf{A}$ of known structure, which enables more efficient approaches to information recovery. Next, we demonstrate two important applications mapped to this generalized linear model. In the first example we seek to reconstruct the input signal while in the second we seek to identify $h_2(t)$.

\subsection{Examples}\label{ssec:applications}

\smallbreak
\noindent$\bullet$\,\,\textbf{\textcolor{black}{Interferometry-based optical spectroscopy:}}
This example concerns analyzing a pulsed optical field into its constituent temporal modes via the interferometric configuration in Fig.~\ref{fig:combination_block}(a). Hence, the goal is to reconstruct the modal coefficients $c_n,n=1,\ldots,N$ of an optical field represented in a finite basis $\psi(t)\!=\!\sum_{n=1}^{N}c_{n}e^{j\omega_{n}t}$. The field $\psi(t)$ enters a two-path optical interferometer such as a Mach-Zehnder interferometer (MZI), and a delay $\tau$ is swept. The input field passes through the sample arm without undergoing any change, i.e., this is a special case of the general model in \ref{sec:basis_analysis} where $h_2(t)\!=\!\delta(t)$, thus $\psi_{2}(t)\!=\!\psi(t)$. Normalizing the energy to unity $\sum_{n=1}^{N}|c_{n}|^{2}\!=\!1$, the interferometric measurements are
\begin{equation}\label{eq:modal_linearmodel}
y(\tau_m)=\sum_{n=1}^{N}|c_n|^2\cos(\omega_n \tau_m), \,\,\,m=1,2,...,M.
\end{equation}
Hence, given these $M$ measurements, the $M\!\times\!1$ measurement vector $\mathbf{y}$ with entries $y(\tau_m)$, fits the linear model in 
(\ref{eq:interferogram_linearmodel}) for a modal coefficient vector $\mathbf{x}\!=\![|c_{1}|^2 |c_{2}|^2\ldots|c_{N}|^{2}]^{\mathrm{T}}$ and an $M\!\times\!N$  matrix $\mathbf{A}\!=\![\cos(\omega_{n} \tau_{m})]$. Revealing the spectrum of the optical field thus amounts to solving a system of linear equations.

\smallbreak
\noindent$\bullet\,\,$\textbf{Time-Domain OCT (TD-OCT):} OCT makes use of a low-coherence (large-bandwidth) optical source in a two-path interferometer such as a Michelson interferometer \cite{Born1999_principleoptics}, as illustrated in Fig.~\ref{fig:combination_apps}(a). If the spectrum of this source is $s(\omega)$, we discretize it and obtain the coefficients $c_{n}\!=\!s(\omega_{n})$. A layered sample is placed in the sample arm, a delay is swept in the reference arm, and the time-averaged energy of the superposed signal is recorded for each delay to reconstruct the layered sample. Hence, this is an example of interferometry where we seek to recover information about the sample impulse response $h_2(t)$. We model the (typically reflective) layered sample by a linear time invariant impulse response $h_2(t)\!=\!\sum_{\ell=1}^Lr_{\ell}\delta(t-T_{\ell})$, which is parametrized by the round-trip time $T_{\ell}$ for the field to travel from the $\ell^{\text{th}}$ sample layer to the sample surface, and $r_{\ell}$ is the field reflectivity of the $\ell^{\mathrm{th}}$ layer. Because the reflection from typical biological samples is very weak, terms that are higher than first order in $r_{\ell}$ are usually ignored. Accordingly, the output from the sample arm
is characterized by the coefficients $d_n\!=\!c_{n}\sum_{\ell=1}^{L}r_{\ell}e^{-j\omega_n T_{\ell}}, n=1,2,\ldots,N$. Assuming that the source is well-characterized (i.e., the coefficients $c_{n}$ are known), then our linear model retrieves the coefficients $d_{n}$. From these, both the reflectivity of the layers and their depths with respect to the sample surface can be reconstructed.

\begin{remark}
\label{rem:loc}
A related example is that of localization in wireless sensor networks shown in Fig.~\ref{fig:combination_apps}(c). Consider two anchor nodes each transmitting a sinusoid with distinct frequencies $\omega_1$ and $\omega_2$. The transmitted signals $\psi_k(t)\!=\!a_{k}e^{j\omega_{k}t},\: k=1,2$, superpose at the receiver to produce the signal $\psi_{\mathrm{s}}(t)\!=\!\psi_{1}(t-d_{1}/c)+\psi_{2}(t-d_{2}/c)$, where $d_1$ and $d_2$ are the distances between the receiver and the anchor nodes, and $c$ is the speed of light in vacuum. In contrast to standard interferometry, in this case we have no control over the relative delay of the two paths. Instead, the received signal is sampled at different time instants to recover the distances. Although this problem cannot be viewed as one of basis analysis, by sampling $M$ points we again obtain linear measurements $\by\!=\!\bA'\bx$, where $\mathbf{x}=\begin{bmatrix}|a_1||a_2|\cos(\frac{\omega_2d_2-\omega_1d_1}{c}) ~~~ |a_1||a_2|\sin(\frac{\omega_2d_2-\omega_1d_1}{c})
\end{bmatrix}^T$ and the matrix $\mathbf{A}'$ has dimensions $M\!\times\!2$, with the entries in each row being $\cos((\omega_1-\omega_2)T_{m})$ and $-\sin((\omega_1-\omega_2)T_{m})$. This can be easily generalized to multiple receiving nodes. 
\end{remark}

\begin{remark}
The applicability of this two-path interferometry framework is by no means restricted to deterministic (periodic) signals, but applies naturally to stochastic signals as well, by virtue of the linearity inherent in the superposition of fields at the output. The only modification required would be in (\ref{eq:cross_corr}), where the time-average is replaced by an expectation over random field realizations. In the case of ergodic stochastic signals, this expectation can be carried out by averaging over a period of time larger than the `coherence time' that is proportional to the inverse of the bandwidth of the power spectral density, as known from the Wiener-Khinchtine theorem \cite{baha}.    
\end{remark}

\section{Generalized interferometry: Hilbert space analyzers}\label{sec:generalmodel}

In the previous section, we proposed a unifying model for temporal two-path interferometry. In this modality, we have shown that the desired information (the input signal or the sample) appears in the harmonic expansion coefficients of a linear measurement model. It turns out that the framework developed can be generalized to arbitrary degrees of freedom of the input signal beyond the temporal, such as the spatial parameters. Underlying this generalization is the notion of `generalized delay', which replaces the standard temporal delay $\tau$ in Section~\ref{sec:interferometry} to allow for analysis in arbitrary bases for the other degrees of freedom. This is our second main contribution, which is the primary focus of this section on generalized interferometry.

In temporal interferometry, we have represented the input signal as a finite discrete superposition of time-frequency complex exponentials. The delayed output signal of the reference arm in (\ref{eq:arm1_output}) is obtained by passing the input signal through a temporal delay modeled as an LTI system with an impulse response $h_1(t;\tau)\!=\!\delta(t-\tau)$. Equivalently, applying a delay amounts to applying a linear phase factor $e^{-j\omega_n \tau},\,n=1,2,...,N$ to the basis harmonics $e^{j\omega_n t},\,n=1,2,...,N$. In other words, the harmonics $e^{j\omega_nt}$ are the \emph{eigenfunctions} of the delay system $h_1(t;\tau)$ with \emph{eigenvalues} $\lambda_n=e^{-j\omega_n\tau}$.

In moving to other degrees of freedom but maintaining the overall interferometric structure, we must replace the temporal delay with an appropriate `generalized delay'. The signal in this case is an element in a Hilbert space spanned by an orthonormal basis ${\{\phi_n(x)\}}$ with respect to an arbitrary variable $x\in \mathbb{R}$ (e.g. space, angle, etc). As before, we represent the input signal or light field $\psi(x)$ as a superposition of the basis elements, $\psi(x)\!=\!\sum_{n=1}^{N}c_n\phi_n(x)$, where the $c_{n}$'s are the basis coefficients. In this setting, we take the generalized delay $\alpha$ (potentially multi-dimensional) -- represented by an impulse response $h_{1}(x;\alpha)$ -- to be the unitary linear system whose impact on the signal is analogous to that of the temporal delay in (\ref{eq:arm1_output}). In other words, the eigenfunctions of the transformation $h_{1}(x;\alpha)$ must be the Hilbert-space basis $\{\phi_{n}(x)\}$ with eigenvalues of the form $e^{-jn\alpha}$. We refer to $\alpha$ hereon as the generalized delay parameter. With these features taken into consideration, the delay operator in the Hilbert-space basis takes on a diagonal representation,
\begin{equation}
h_{1}(x,x';\alpha)=\sum_{n=1}^{N}e^{-j n\alpha}\phi_{n}(x)\phi_{n}^{*}(x').
\end{equation}
The structure of this operator has several salutary properties that justify calling it a generalized delay. It is additive in the delay parameter $\int\!dx'h_{1}(x,x';\alpha)h_{1}(x',x'';\beta)=h_{1}(x,x'';\alpha+\beta)$; its inverse is the same operator but with a delay parameter $-\alpha$; and $h_{1}(x,x';0)$ is the identity. It has been shown that this structure corresponds in general to fractional transforms. For example, when the basis $\{\phi_{n}(x)\}$ is that of Hermite-Gaussian function, $h_{1}$ corresponds to the fractional Fourier transform (frFT) \cite{NAMIAS80JAM1,Pei07TSP}; when the basis is that of Laguerre-Gaussian functions, $h_{1}$ corresponds to a fractional Hankel transform, etc \cite{NAMIAS80JAM2,Abouraddy12OL}.

As such, the response of this `delay' to the input $\phi_n(x)$ is $\int_{-\infty}^{+\infty}\phi_n(x')h_{1}(x,x';\alpha)dx'$ will be $e^{-jn\alpha}\phi_n(x)$. Thus, a signal $\psi(x)$ after being `delayed' takes the form 
\begin{equation}\label{eq:filter output}
\psi_1(x;\alpha)\!=\!\int_{-\infty}^{+\infty}\!\!\!\psi(x)h_{1}(x,x';\alpha)dx'\!=\!\sum_{n=1}^N c_n e^{-jn\alpha}{\phi}_n(x).
\end{equation}
This idea underlies our approach to conduct interferometry in arbitrary bases related to other degrees of freedom. As pointed out earlier, in this paper we focus on signals in finite-dimensional Hilbert spaces since in practice only few basis elements contribute to the actual signal or can be accessed. In the two examples described in the previous Section, spectroscopy and OCT, only a few harmonics or reflection layers contribute to the signal analyzed. For spatial signals, the highest-order basis coefficient is limited by the size of the optical aperture.

The sample arm is modeled as an LTI system $h_{2}(x)$ that maps the input signal to an output
\begin{equation}\label{eq:second_arm_out_general}
\psi_{2}(x)=\sum_{n=1}^{N}d_n \phi_n(x),
\end{equation}
where $d_n, \,\,\,n=1,2,...,N$ are new basis coefficients. In temporal interferometry, the signal energies are acquired by time-averaging for each setting of the temporal delay $\tau$. Here, in generalized interferometry, the signal energy is obtained by averaging over the degree of freedom $x$ for each setting of the delay $\alpha$. Accordingly, the interferogram generated is
\begin{equation}\label{eq:interferogram_general2}
I(\alpha)=I_1+I_2+2\sum_{n=1}^{N}\!|c_n||d_n|\cos\big(n\alpha+\theta_n\big),
\end{equation}
which is analogous to (\ref{eq:interferogram_exponential}), and $I_{1}$ and $I_{2}$ represent the energy in each arm.

A number $M$ of interferometric measurements are collected by sampling the delay parameter $\alpha_{m}, \,\,\,m=1,2,...,M$. 
Therefore, similar to (\ref{eq:define_linearmeasurements}), we obtain a linear model for the interferometric measurements where,
\begin{align}\label{eq:interferogram_linear_general}
\begin{aligned}
y(\alpha_{m})\triangleq\tfrac{1}{2}(I(\alpha_m)-I_{1}-I_{2})
=\sum_{n=1}^{N}|c_n||d_n|\cos\big(n\alpha_{m}+\theta_n\big),\,m=1,2,...,M.
\end{aligned}
\end{align}
Thus (\ref{eq:interferogram_linear_general}) can also be cast in vector form as
\begin{equation}\label{eq:linear_generalizedinterf}
\mathbf{y}=\mathbf{A}\mathbf{x},
\end{equation}
where the definitions and dimensions of the information coefficient vector $\mathbf{x}$, the measurement vector $\mathbf{y}$, and the matrix $\mathbf{A}$ are identical to those in (\ref{eq:interferogram_linearmodel}) after replacing the temporal delay samples $\tau_{m}$ with the sampled generalized delay parameter $\alpha_{m}$. Similar to temporal interferometry, the measurement model in (\ref{eq:interferogram_linear_general}) enables us to retrieve information about the input signal or the sample embedded in the coefficients $c_n$ and $d_n$.

Remarkably, the result in (\ref{eq:linear_generalizedinterf}) shows that the proposed framework is in fact basis-neutral. This is clear from the fact that $\mathbf{x}$, $\mathbf{y}$, and $\mathbf{A}$ have no traces of the basis functions $\{\phi_{n}(x)\}$, which is a consequence of the diagonal representation of the generalized delay in this basis. Therefore, any analysis based on (\ref{eq:linear_generalizedinterf}) is independent of the underlying basis and applies equally to all.

To this point, we used temporal interferometry to analyze a signal or an optical field into its time-frequency harmonics. 
Instead, signals or optical fields can also be analyzed in different bases with spatial degrees of freedom, in Cartesian or polar coordinate systems, for example. Rapid and accurate modal analysis in different bases is of critical importance for terabit communications in free space \cite{Wang12NP} and multimode fibers \cite{Bozinovic13Sc} that make use of spatial multiplexing to increase the information-carrying capacity.

As a case study, we show how to leverage a Hilbert space analyzer to decompose an optical field in the Hilbert space spanned by the Hermite-Gaussian (HG) modes, which are of paramount importance in optics because they are natural modes of laser resonators \cite{siegman1986lasers}.  Consider an optical field $\psi(x)\!=\!\sum_{n=1}^{N}c_n\phi_n(x)$ consisting of a superposition of HG modes. To analyze the field into its constituent modes, the reference arm should include a unitary transformation for which the HG modes are eigenfunctions. As mentioned above, this transformation is a frFT of order $\alpha$ since the HG modes are eigenfunctions of the frFT with eigenvalues $e^{-jn\alpha}$ \cite{Ozaktas01wiley}. The kernel of an frFT system of order $\alpha$ is,
\begin{equation}\label{eq:actualfrFT}
h_{1}(x,x';\alpha)\!\propto\!\exp\left\{\frac{j\pi}{2}(x^2\cot\alpha\!+\!x'^2\cot\alpha\!-\!2xx'\csc\alpha)\right\},
\end{equation}
whose optical implementation makes use of two cylindrical lenses \cite{Abouraddy12OL,sci_rep2017}; here $x$ and $x'$ are appropriately normalized spatial coordinates. The output from the frFT is superposed with the output of the sample arm $\psi_2(t)\!=\!\psi(x)$ to acquire the interferometric measurements as in (\ref{eq:interferogram_general2}). Considering (\ref{eq:interferogram_linear_general}), the interferogram can be cast as $\mathbf{y}\!=\!\mathbf{A}\mathbf{x}$, where $\mathbf{x}^T=[|c_1|^2 |c_2|^2...|c_N|^2]$, and $\mathbf{A}\!=\![\cos(n\alpha_m)]$, $n=1,2,...,N$, and $m=1,2,...,M$.

\textcolor{black}{
\begin{remark}
\label{rem:multidim}
Our approach extends naturally to signals or fields described by multiple degrees of freedom, in which case interferometry can be performed in higher dimensions by introducing several generalized delays, one for each degree of freedom -- an example is shown in Fig. \ref{fig:combination_apps}(b). In such cases, multi-dimensional interferograms are produced by sampling the corresponding delay parameters. For brevity and to simplify the exposition, we have only presented the 1D case, however, we provide a 2D example in Section \ref{sec:multidim_example}. 
\end{remark}
}

\section{Compressive Reconstruction Under Sensing Constraints}\label{sec: CS_practical constraint}

We established that the interferograms relate to the Hilbert space coefficients via a linear operator defined by the parameters of the interferometer. It follows that basis analysis from interferometric measurements is amenable to compressive data acquisition in the sense that recovery and reconstruction can be potentially carried out using a reduced number of measurements provided $\mathbf{x}$ admits some additional structure. As mentioned earlier, this linear model holds for a wide range of problems regardless of the nature of the underlying signal domain. For example, in analyzing an optical beam into spatial modes, many of the basis coefficients corresponding to modal occupation are zero or near-zero since only a few are activated at any time in high-speed optical communications using spatial multiplexing. Clearly, $\mathbf{x}$ admits a sparse structure.

This Section is focused on compressive basis analysis from interferometric measurements under a sparsity assumption on $\mathbf{x}$ with the goal of reducing both the sampling complexity and acquisition time. This is particularly useful for scenarios where measurements are costly, as well as in delay-sensitive applications. We seek reconstruction of $\mathbf{x}$ from $M\ll N$ interferometric measurements corresponding to $M$ settings of the interferometric delay parameter $\alpha$.

We point out two fundamental differences between our approach
and prior work employing compressive techniques. First, the vast majority of prior work on compressive sensing presumes one has full control over the design of the sensing matrix -- for example, in optics, by introducing designed random masks along the path of an optical field in an imaging system \cite{Duarte08SPM, Howland14PRL,Mirhosseini14PRL}. In sharp contrast, the matrix $\mathbf{A}$ in our interferometric formulation is imposed through the structure of the interferometer itself. Therefore, compression has to be carried out under sensing constraints set by the limited degrees of freedom of the sensing system. It is not clear at the outset whether performance guarantees on reconstruction can be established given the special structure of the constrained matrix $\mathbf{A}$. Second, previous work on using compressive sensing in optical interferometry has mostly focused on reducing the number of measurements used for recovery/reconstruction, but not on compressive data acquisition. For example, in the context of OCT, the approach in \cite{baha_OCT,Liu:10} selects a random subset of many interferometric measurements collected using a CCD array detector. This amounts to using fewer measurements in the recovery of depth information and discarding measurements already collected by the physical sensing system. By contrast, our approach directly uses the degrees of freedom inherent to the sensing system (by assigning some random values to the generalized phase $\alpha$) to reduce the data acquired in the first place for subsequent recovery.

\subsection{Preliminaries}
Consider the standard $\ell_1$-minimization convex program, so-called Basis Pursuit (BP) \cite{BP}, for reconstructing $\mathbf{x}$ from $M$ measurements of the form (\ref{eq:interferogram_linearmodel}),
\begin{equation}\label{eq:BP}
\begin{split}
&\text{minimize}~ \|\mathbf{x}\|_1\\
& \text{subject to}\,\,\,\,\, \mathbf{y}=\mathbf{A}\mathbf{x}.
\end{split}
\end{equation}
It is widely recognized that (\ref{eq:BP}) can successfully recover a sparse vector $\mathbf{x}\in\mathbb{R}^{2N}$ from $M\ll 2N$ measurements provided the sensing matrix $\mathbf{A}$ satisfies some conditions \cite{Candes08CR}. For example, it has been established that an $s$-sparse vector (with at most $s$ non-zero elements) can be reconstructed using (\ref{eq:BP}) if $\mathbf{A}$ satisfies the Restricted Isometry Property (RIP), which requires that
\begin{equation}\label{eq:RIP}
(1-\delta)\|\hat{\mathbf{x}}\|_2^2\!\leq\!\|\mathbf{A}\hat{\mathbf{x}}\|_2^2\!\leq\!(1+\delta)\|\hat{\mathbf{x}}\|_2^2
\end{equation}
for any $\hat{\mathbf{x}}\in\Sigma_{2s}$, where $\Sigma_{2s}:=\{\mathbf{x}\in\mathbb{R}^{2N}: \|x\|_0\leq 2s\}$ is the set of all $2s$-sparse vectors in $\mathbb{R}^{2N}$ for a parameter $0<\delta\!<\!\sqrt{2}-1$ known as the restricted isometry constant.

\subsection{Constrained sensing}\label{sec:cons_sens}
Sub-Gaussian random sensing matrices satisfy the RIP with high probability for $M\approx\mathcal{O}(s\log N)$, which motivated their use in several CS applications. However, in practice one may not have full control over the design of the sensing matrix $\mathbf{A}$ as it is normally determined by the structure of the data acquisition system (DAQ). As such, much of the prior work in measurement and instrumentation relied on introducing additional random masks during the measurement process to emulate random sensing matrices (e.g., optical imaging \cite{Duarte08SPM} and field estimation \cite{Mirhosseini14PRL}). This requires modifying the actual DAQs thereby incurs additional cost and complexity.

In sharp contrast, we show here that CS can be exploited in `native' interferometry, that is, without modifying the underlying interferometer structure nor introducing additional components. Recalling that the rows of the sensing matrix $\bA$ found in (\ref{eq:linear_generalizedinterf}) have the $\alpha$-dependent structure
\begin{equation}
\mathbf{a}_m=[\ba_{m1} \,\,\, \mathbf{a}_{m2}], \,\,\,m=1,2,...,M,
\end{equation}
where,
\begin{equation}
\begin{split}
&\mathbf{a}_{m1}^{\mathrm{T}}=[\cos(n\alpha_m)],~
\mathbf{a}_{m2}^{\mathrm{T}}=[-\sin(n\alpha_m)],
\end{split}
\end{equation}
we see that $\bA$ has only few degrees of freedom corresponding to the settings of the generalized delay parameter $\alpha$. The rest of this section focuses  on signal reconstruction based on compressive interferometric measurements of the form (\ref{eq:linear_generalizedinterf}) and establishing performance guarantees thereof.

\subsection{Guarantees with randomized delays}\label{ssec:random}
Collecting informative interferometric measurements (\ref{eq:linear_generalizedinterf}), and in turn achieving better performance in reconstruction, is premised on selecting appropriate values for the generalized delay parameter $\alpha$. We consider sensing matrices generated by drawing generalized delays from random distributions.

Throughout this section, we consider normalized interferometric measurements $\by = \hat{\bA}\bx$, where $\hat{\bA} = \sqrt{2/M}\bA$ and $\bA$ the original matrix defined in (\ref{eq:linear_generalizedinterf}). Our next theorem establishes that the matrix $\hat{\bA}$ is RIP provided the generalized delay parameters are selected from an appropriate distribution. 

\begin{theorem}\label{th:RIP_nonasymptotic_general}
If the generalized delay parameters $\alpha_m,\,m=1,2,\ldots,M$, of the matrix $\mathbf{A}$ in (\ref{eq:linear_generalizedinterf}) are chosen independently and uniformly at random from distribution ${\cal U}[0,2\pi]$, then there exist positive constants $c_1$, $c_2$ such that $\hat{\mathbf{A}}:=\sqrt{2/M}\bA$ satisfies the RIP in (\ref{eq:RIP}) with respect to all $s$-sparse vectors with any $s\!\leq\!c_1M/\log(2N/s)$, and an RIP constant $0\!<\delta\!<1$ with probability greater than $1-2e^{-c_2M}$,  where $c_2 \!\leq\!c_0(\delta/2)-c_1[1+(1+\log(12/\delta))/\log(2N/s)]$.
\end{theorem}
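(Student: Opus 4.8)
The plan is to prove the RIP for $\hat{\bA}$ by the route standard for RIP bounds of random matrices \cite{Candes08CR}: first a \emph{pointwise} concentration estimate showing that, for each fixed $\hat{\bx}$, $\|\hat{\bA}\hat{\bx}\|_2^2$ concentrates tightly around $\|\hat{\bx}\|_2^2$ with failure probability exponentially small in $M$; then a covering (net) plus union-bound argument promoting this to a uniform statement over all $s$-sparse vectors. The precise algebraic shape of the asserted bound --- failure probability $2e^{-c_2M}$ with $c_2\le c_0(\delta/2)-c_1\big[1+(1+\log(12/\delta))/\log(2N/s)\big]$ --- is exactly what this scheme outputs once the pointwise exponent is named $c_0(\delta)$, so reproducing that shape is the target.

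For the pointwise step, fix $\hat{\bx}=[\bx_1^{\mathrm T}\ \bx_2^{\mathrm T}]^{\mathrm T}$ and let $g(\alpha):=\sum_{n=1}^{N}\big(x_{1,n}\cos(n\alpha)-x_{2,n}\sin(n\alpha)\big)$ be the inner product of the $\alpha$-indexed row of $\bA$ with $\hat{\bx}$, so that $\|\hat{\bA}\hat{\bx}\|_2^2=\tfrac{2}{M}\sum_{m=1}^{M}g(\alpha_m)^2$ is a sum of i.i.d.\ terms (the $\alpha_m$ are i.i.d.\ $\mathcal{U}[0,2\pi]$). Orthogonality of $\{\cos(n\cdot),\sin(n\cdot)\}_{n\ge1}$ under $\mathcal{U}[0,2\pi]$ gives $\mathbb{E}[g(\alpha)^2]=\tfrac12\|\hat{\bx}\|_2^2$, hence $\mathbb{E}\|\hat{\bA}\hat{\bx}\|_2^2=\|\hat{\bx}\|_2^2$ --- which is why the normalization $\sqrt{2/M}$ is chosen. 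What remains is a Chernoff/Bernstein bound showing that the empirical mean $\tfrac1M\sum_m g(\alpha_m)^2$ deviates from $\tfrac12\|\hat{\bx}\|_2^2$ by more than $\tfrac{\delta}{2}\|\hat{\bx}\|_2^2$ only with probability $\le 2e^{-Mc_0(\delta)}$, exploiting that on the sparse set $g(\alpha)^2$ is bounded ($|g(\alpha)|\le\|\hat{\bx}\|_1$) with controlled fourth moment (equivalently, controlled variance of $g(\alpha)^2$).

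For the uniform step, fix a support $T$ with $|T|=s$, let $X_T$ be the corresponding coordinate subspace, and cover its unit sphere by a $\rho$-net $\mathcal{N}_T$ of cardinality $\le(1+2/\rho)^s$ with $\rho$ a small multiple of $\delta$. Applying the pointwise bound with tolerance $\delta/2$ at each net point and union-bounding, the usual telescoping/geometric-series argument upgrades net-wise control to $(1-\delta)\|\bu\|_2^2\le\|\hat{\bA}\bu\|_2^2\le(1+\delta)\|\bu\|_2^2$ for every $\bu\in X_T$. A union bound over the $\binom{2N}{s}\le(2eN/s)^s$ choices of $T$ then gives total failure probability at most $2\exp\!\big(s[\log(2N/s)+1+\log(12/\delta)]-Mc_0(\delta/2)\big)$; imposing $s\le c_1M/\log(2N/s)$ makes the $-Mc_0(\delta/2)$ term dominate, producing $2e^{-c_2M}$ with the stated $c_2$.

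The main obstacle is obtaining, in the pointwise step, an exponent $c_0(\delta)$ that does \emph{not} deteriorate with $s$ or $N$. This is where the constrained structure of $\bA$ bites: unlike an i.i.d.\ sub-Gaussian matrix, each row here is a deterministic function of a single scalar, so $g(\alpha)=\langle\ba_m,\hat{\bx}\rangle$ is a trigonometric polynomial whose sup norm can be as large as $\|\hat{\bx}\|_1\approx\sqrt{s}\,\|\hat{\bx}\|_2$ --- Dirichlet-kernel behaviour when the active frequencies are nearly consecutive with comparable magnitudes --- rather than $O(\|\hat{\bx}\|_2)$. A crude bounded-difference bound then loses a factor $s^2$, and even a careful variance estimate feeding Bernstein's inequality only yields $M\gtrsim s^2\log(2N/s)$; reaching the linear-in-$s$ scaling asserted forces one to use the $M$-fold averaging more delicately, e.g.\ through a chaining/entropy argument adapted to the bounded orthonormal cosine--sine system. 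That is where the real work lies --- the orthogonality computation, the net construction, the geometric-series passage, and the exponent bookkeeping are all routine.
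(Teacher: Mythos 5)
Your overall route---a pointwise concentration estimate for $\|\hat{\bA}\hat{\bx}\|_2^2$ followed by the net-plus-union-bound machinery of \cite{Simpleproof}---is exactly the paper's route, and your orthogonality computation ($\mathbb{E}[g(\alpha)^2]=\tfrac12\|\hat{\bx}\|_2^2$, hence $\mathbb{E}\|\hat{\bA}\hat{\bx}\|_2^2=\|\hat{\bx}\|_2^2$) matches the paper's. The difference is in the step you single out as ``where the real work lies'': obtaining a pointwise exponent $c_0(\delta)$ that does not deteriorate with $s$. You leave that step unexecuted, merely naming a chaining/entropy argument for bounded orthonormal systems as the tool that would deliver it. As a self-contained proof this is a genuine gap: without the chaining argument actually carried out, your write-up establishes neither the concentration inequality with an $s$-free constant nor, consequently, the theorem with constants independent of $s$.

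It is worth knowing, however, that the paper does not do that work either. Its proof fixes $\bx_0\in\Sigma_s$, bounds each summand by Cauchy--Schwarz restricted to the support, $Z_m=|\langle\hat{\ba}_m,\bx_0\rangle|^2\le\tfrac{2s}{M}\|\bx_0\|_2^2$, and applies Hoeffding's inequality \cite{Hoeffding} to get exactly the ``crude'' bounded-range estimate you rejected, with exponent $c_0(\epsilon)=\epsilon^2/(2s^2)$; this $s$-dependent $c_0$ is then fed into Theorem 5.2 of \cite{Simpleproof}. So the constants $c_1,c_2$ in the theorem are implicitly $s$-dependent (for $c_2>0$ one effectively needs $c_1\lesssim\delta^2/s^2$, i.e.\ $M\gtrsim s^3\log(2N/s)$ rather than the nominal linear-in-$s$ scaling), a weakness your diagnosis correctly exposes. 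In short: if you accept the Hoeffding bound you described and dismissed, you reproduce the paper's proof verbatim, $s$-dependent constants included; if you insist on an $s$-independent exponent, you are attempting a strictly stronger result than the paper's argument establishes, and the chaining step you would need is missing from your proposal.
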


\begin{proof}\label{proof:RIP_nonasymptotic_general}
Following the procedure in \cite{Simpleproof}, it suffices to show that $\hat\bA$ satisfies the concentration inequality
\begin{equation}\label{eq:concentration}
\mathbb{P}\{|\|\hat{\mathbf{A}}\mathbf{x}\|^2-\|\mathbf{x}\|^2|\!\geq\!\epsilon \|\mathbf{x}\|^2\}\!\leq\!2e^{-Mc_0(\epsilon)}, \,\,\,\,0<\epsilon<1
\end{equation}
for all $\mathbf{x}\in\Sigma_s$ under the condition in the statement of Theorem \ref{th:RIP_nonasymptotic_general}.
Since the $M$ realizations $\alpha_m, m = 1,\ldots, M$, are selected independently from a random uniform distribution ${\cal U}[0,2\pi]$, $\|\hat{\mathbf{A}}\mathbf{x}\|^2$ can be written as a sum of $M$ i.i.d. random variables,
\begin{equation}
\|\hat{\mathbf{A}}\mathbf{x}\|^2=\sum_{m=1}^{M}|\langle \hat{\mathbf{a}}_m,\mathbf{x}\rangle|^2,\,\,\,\,\forall \mathbf{x}\!\in\!\Sigma_s,
\end{equation}
where $\langle.,.\rangle$ denotes the inner product of its two vectors argument. Assuming a fixed but arbitrary vector $\mathbf{x}_0\in\Sigma_s$, each random variable $Z_m\!\triangleq\!|\!<\!\hat{\mathbf{a}}_m,\mathbf{x}_0\!>\!|^2$ can be bounded as,
\begin{equation}
\begin{split}
Z_m\!\triangleq|\!<\!\hat{\mathbf{a}}_m,\mathbf{x}_0\!>\!|^2\!\leq\!\|\hat{\mathbf{a}}_m\|^2\cdot\|\mathbf{x}_0\|^2\!\leq\!\frac{2s}{M}\|\mathbf{x}_0\|^2, \,\,\,m=1,2,\ldots,M,
\end{split}
\end{equation}
using the Cauchy-Schwarz inequality \cite{Cauchy}.
Hence, the random variable $\|\hat{\mathbf{A}}\mathbf{x}_0\|^2$ is a summation of $M$ bounded random variables $Z_m\!\in\![0,\frac{2s}{M}\|\mathbf{x}_0\|^2]$. Accordingly, using Hoeffding's inequality \cite{Hoeffding} we have
\begin{equation}\label{eq:concentration_before_final}
\begin{split}
\mathbb{P}\{|\|\hat{\mathbf{A}}\mathbf{x}_0\|^2-\mathbb{E}{\|\hat{\mathbf{A}}\mathbf{x}_0\|^2}|\!\geq\!\epsilon \|\mathbf{x}_0\|^2\}
\leq\!2e^{-\frac{2\epsilon^2\|\mathbf{x}_0\|^4}{\frac{4s^2}{M}\|\mathbf{x}_0\|^4}}
=2e^{-M\frac{\epsilon^2}{2s^2}}, \,\,\,\,0<\epsilon<1.
\end{split}
\end{equation}
Given the distribution of $\alpha$, $\mathbb{E}{\|\hat{\mathbf{A}}\mathbf{x}\|^2}=\|\mathbf{x}\|^2$ for all $\mathbf{x}\!\in\!\Sigma_s$. Thus, we can rewrite (\ref{eq:concentration_before_final}) as,
\begin{equation}\label{eq:concentration_final}
\begin{split}
\mathbb{P}\{|\|\hat{\mathbf{A}}\mathbf{x}\|^2-{\|\mathbf{x}\|^2}|\!\geq\!\epsilon \|\mathbf{x}\|^2\}\!\leq\!2e^{-Mc_0(\epsilon)},
\,\,\,\, 0<\epsilon<1, \,\,\forall \,\, \mathbf{x}\!\in\!\Sigma_s,
\end{split}
\end{equation}
where $c_0(\epsilon)=\epsilon^2/2s^2$. Hence, it follows from \cite[Theorem 5.2]{Simpleproof} that the matrix $\hat{\mathbf{A}}$ is RIP with respect to all $\mathbf{x}\!\in\!\Sigma_s$ with RIP constant $0\!<\delta<\!1$, with probability greater than $1-2e^{-c_2M}$,  where $c_2 \!\leq\!c_0(\delta/2)-c_1[1+(1+\log(12/\delta))/\log(2N/s)]$.
\end{proof}

Based on Theorem \ref{th:RIP_nonasymptotic_general}, the sensing matrix $\hat{\mathbf{A}}$ satisfies the RIP with higher probability as the number of measurements $M$ increases. The next corollary identifies an asymptotic regime where the sensing matrix satisfies the RIP with probability $1$. 

\begin{corollary}\label{cor:asymptotic}
The sensing matrix $\hat{\mathbf{A}}$ defined in Theorem \ref{th:RIP_nonasymptotic_general} satisfies the RIP with a constant $0\!<\delta\!<1$~ for all $s$-sparse vectors with probability $1$, if $N\!\rightarrow\!\infty$ and $M\!\rightarrow\!\infty$. 
\end{corollary}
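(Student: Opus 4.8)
The corollary is really the asymptotic reading of the probability bound $1-2e^{-c_2M}$ already furnished by Theorem~\ref{th:RIP_nonasymptotic_general}, so the plan is to show that, for a fixed target constant $\delta\in(0,1)$ and a fixed sparsity level $s$, the constants $c_1,c_2$ appearing there can be taken strictly positive and, crucially, \emph{bounded away from $0$ uniformly in $N$}, whence the failure probability $2e^{-c_2M}$ vanishes as $M\to\infty$. First I would fix $\delta$ and $s$ and recall from the proof of Theorem~\ref{th:RIP_nonasymptotic_general} that the admissible exponent obeys $c_2\le c_0(\delta/2)-c_1\bigl[1+(1+\log(12/\delta))/\log(2N/s)\bigr]$ with $c_0(\epsilon)=\epsilon^2/(2s^2)$, and that $c_1>0$ is a free parameter which only controls the sparsity budget $s\le c_1M/\log(2N/s)$.

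The key observation is that $c_0(\delta/2)=\delta^2/(8s^2)$ is a fixed positive number depending only on $\delta$ and $s$, whereas the bracketed factor tends to $1$ as $N\to\infty$ because $\log(2N/s)\to\infty$. Hence there is an $N_0=N_0(\delta,s)$ such that this factor is smaller than $2$ for all $N\ge N_0$. Choosing any $c_1$ with $0<c_1<c_0(\delta/2)/2$ then yields a legitimate, $N$-independent value $c_2:=c_0(\delta/2)-2c_1>0$ valid for every $N\ge N_0$; shrinking $c_1$ is harmless, since it merely restricts $s$ further and $s$ is already held fixed.

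It then remains to verify the hypothesis $s\le c_1M/\log(2N/s)$ in the joint limit. With $s$ and $c_1>0$ fixed, this holds as soon as $M\ge s\log(2N/s)/c_1$; since the right-hand side grows only logarithmically in $N$ while $M\to\infty$, it is satisfied for all sufficiently large $(M,N)$ — e.g. whenever $M$ exceeds a suitable constant multiple of $\log N$. Under these conditions Theorem~\ref{th:RIP_nonasymptotic_general} gives $\mathbb{P}\{\hat{\mathbf{A}}\text{ satisfies (\ref{eq:RIP}) with constant }\delta\}\ge 1-2e^{-c_2M}\to 1$ as $M\to\infty$. To phrase the conclusion as ``probability $1$'' I would pass to a sequence $(M_k,N_k)\to(\infty,\infty)$ with $M_k$ growing fast enough that $\sum_k e^{-c_2M_k}<\infty$ (for instance $M_k\ge(2/c_2)\log k$); the Borel--Cantelli lemma then guarantees that, almost surely, $\hat{\mathbf{A}}$ is RIP for all but finitely many $k$, which is the asserted asymptotic behaviour.

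The one delicate point — and the main obstacle — is that the concentration constant $c_0(\epsilon)=\epsilon^2/(2s^2)$ here degrades quadratically in $s$, unlike the sparsity-free constant of the sub-Gaussian model; consequently the argument genuinely needs $s$ to stay fixed (or grow very slowly) as $N,M\to\infty$ and does \emph{not} tolerate $s$ scaling like $M/\log N$ with $N$-independent constants. Establishing that, in spite of this, the exponent $c_2$ can be kept uniformly positive as $N\to\infty$ — i.e. controlling the interplay between the vanishing $1/\log(2N/s)$ correction, the choice of $c_1$, and the resulting sparsity budget — is the crux; the remaining steps are a routine limiting argument.
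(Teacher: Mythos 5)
Your proposal is correct and follows essentially the same route as the paper's own (very brief) proof: read off the bound from Theorem~\ref{th:RIP_nonasymptotic_general} and note that the failure probability $2e^{-c_2M}$ vanishes as $M\!\rightarrow\!\infty$ because $c_2>0$. The only difference is one of rigor: the paper simply asserts that $c_2$ is a positive constant, whereas you verify that it can be kept uniformly positive in $N$ by choosing $c_1<c_0(\delta/2)/2$ once the bracketed factor drops below $2$, check the sparsity budget $s\leq c_1M/\log(2N/s)$ in the joint limit, and add a Borel--Cantelli gloss for the ``probability $1$'' phrasing---all consistent with, and filling in, the paper's argument.
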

\begin{proof}\label{proofcor:asymptotic}
The proof follows directly from the fact that $c_2$ is always a positive constant, so  
\begin{equation}
\lim_{M\!\rightarrow\!\infty} -c_2 M=-\infty
\end{equation}

given the asymptotic order of $M$ in the statement of the corollary. Therefore, the probability $1-2e^{-c_2M}\!\rightarrow\!1$.
\end{proof}

The results of Theorem \ref{th:RIP_nonasymptotic_general} and Corollary \ref{cor:asymptotic} are general in that they apply to every problem in interferometry with the measurement model in (\ref{eq:linear_generalizedinterf}). We have already established the generality of the framework that gave rise to (\ref{eq:linear_generalizedinterf}), which was also shown to be basis-neutral. As a direct application of this result, the following corollary establishes that the matrix arising in optical modal analysis at the end of Section \ref{sec:generalmodel} is also RIP.
\begin{corollary}\label{cor:optics}
Given $0\!<\delta<\!1$ and $s\!\leq\!c_1M/\log(N/s)$, the sensing matrix $\hat{\mathbf{A}}=[\sqrt{2/M}\cos(n\alpha_m)]$ arising in the (generalized) optical modal analysis example (which consists of only the cosine terms), is RIP with respect to all $s$-sparse vectors in $\mathbb{R}^{N}$ with probability greater than $1-2e^{-c_2M}$, where where $c_2 \!\leq\!c_0(\delta/2)-c_1[1+(1+\log(12/\delta))/\log(N/s)]$.
\end{corollary}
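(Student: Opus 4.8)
The plan is to prove the corollary exactly as Theorem~\ref{th:RIP_nonasymptotic_general} was proved: reduce the RIP to the concentration inequality required by \cite[Theorem~5.2]{Simpleproof} for the $M\times N$ matrix $\hat{\mathbf{A}}=[\sqrt{2/M}\cos(n\alpha_m)]$, verify that inequality directly via Hoeffding, and then invoke \cite[Theorem~5.2]{Simpleproof} with ambient dimension $N$ in place of $2N$ --- it is this change of ambient dimension that sharpens the sparsity threshold from $\log(2N/s)$ to $\log(N/s)$ and accounts for the slightly different constant $c_2$ in the statement.

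Concretely, I would fix an arbitrary $\mathbf{x}_0\in\Sigma_s\subset\mathbb{R}^N$ and write $\|\hat{\mathbf{A}}\mathbf{x}_0\|^2=\sum_{m=1}^M Z_m$ with $Z_m\triangleq|\langle\hat{\mathbf{a}}_m,\mathbf{x}_0\rangle|^2$ and $\hat{\mathbf{a}}_m^{\mathrm{T}}=\sqrt{2/M}[\cos(n\alpha_m)]_{n=1}^N$; since the $\alpha_m$ are drawn i.i.d.\ from ${\cal U}[0,2\pi]$, the $Z_m$ are i.i.d. Restricting the Cauchy--Schwarz bound to the at most $s$ coordinates in the support of $\mathbf{x}_0$ gives $Z_m\in[0,\tfrac{2s}{M}\|\mathbf{x}_0\|^2]$, exactly as in the proof of Theorem~\ref{th:RIP_nonasymptotic_general}. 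The one computation to carry out with care is the mean: for positive integers $n,n'$ and $\alpha\sim{\cal U}[0,2\pi]$ one has $\mathbb{E}[\cos(n\alpha)\cos(n'\alpha)]=\tfrac12\delta_{nn'}$ (product-to-sum, then $\int_0^{2\pi}\cos(k\alpha)\,d\alpha=0$ for every nonzero integer $k$), hence $\mathbb{E}\,Z_m=\tfrac1M\|\mathbf{x}_0\|^2$ and $\mathbb{E}\|\hat{\mathbf{A}}\mathbf{x}_0\|^2=\|\mathbf{x}_0\|^2$ --- this is precisely why $\sqrt{2/M}$, rather than $\sqrt{1/M}$, is the correct column normalization in the cosine-only model. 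Hoeffding's inequality \cite{Hoeffding} applied to $\sum_m Z_m$ with these bounds then yields $\mathbb{P}\{|\|\hat{\mathbf{A}}\mathbf{x}_0\|^2-\|\mathbf{x}_0\|^2|\ge\epsilon\|\mathbf{x}_0\|^2\}\le 2e^{-Mc_0(\epsilon)}$ with $c_0(\epsilon)=\epsilon^2/(2s^2)$, for all $0<\epsilon<1$ and all $\mathbf{x}_0\in\Sigma_s$.

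Plugging this concentration inequality into \cite[Theorem~5.2]{Simpleproof} with ambient dimension $N$ --- so that the union bound runs over the $\binom{N}{s}\le(eN/s)^s$ possible supports --- gives RIP of order $s$ with constant $\delta$ for $s\le c_1M/\log(N/s)$ and failure probability $2e^{-c_2M}$ with $c_2\le c_0(\delta/2)-c_1[1+(1+\log(12/\delta))/\log(N/s)]$, as stated. I do not expect any genuine obstacle here: the argument is a transcription of the proof of Theorem~\ref{th:RIP_nonasymptotic_general} with $2N\mapsto N$, so the only things to get right are the second-moment identity above (which certifies that $\sqrt{2/M}$ still normalizes the columns on average) and tracking that the covering/union-bound step now involves $N$ rather than $2N$ columns. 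One could alternatively observe that $\hat{\mathbf{A}}$ here is exactly the first block $\hat{\mathbf{A}}_1$ of the matrix in Theorem~\ref{th:RIP_nonasymptotic_general} and hence inherits the RIP at once --- but that shortcut only delivers the weaker threshold $s\le c_1M/\log(2N/s)$, and recovering the sharper $\log(N/s)$ requires redoing the union bound as above.
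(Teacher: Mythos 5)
Your proposal is correct and takes essentially the same route as the paper, whose proof of this corollary is just the one-line reduction ``apply Theorem~\ref{th:RIP_nonasymptotic_general} with $2N$ replaced by $N$''; your transcription (restricted Cauchy--Schwarz bound, Hoeffding, then \cite[Theorem~5.2]{Simpleproof} with ambient dimension $N$) is exactly what that reduction expands to. Your explicit verification that $\mathbb{E}[\cos(n\alpha)\cos(n'\alpha)]=\tfrac{1}{2}\delta_{nn'}$, so that the $\sqrt{2/M}$ scaling still gives $\mathbb{E}\|\hat{\mathbf{A}}\mathbf{x}_0\|^2=\|\mathbf{x}_0\|^2$ for the cosine-only block, fills in a detail the paper leaves implicit but does not change the argument.
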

The proof follows directly from Theorem \ref{th:RIP_nonasymptotic_general} and by replacing $2N$ with $N$ in the probability bound.

Theorem  \ref{th:RIP_nonasymptotic_general} established a lower bound on the probability that $\hat{\mathbf{A}}$ is RIP, which goes asymptotically to $1$ per Corollary \ref{cor:asymptotic}. In non-asymptotic regimes and when the number of measurements is not sufficiently large, this bound can be fairly far from $1$. It turns out that the constrained matrix ${\mathbf{A}}$ also satisfies some weaker sufficient conditions for recoverability when the generalized delay parameters are drawn uniformly at random. In particular, we establish that the ensemble of sensing matrices corresponding to $\alpha$'s drawn from a uniform distribution ${\cal U}[0,2\pi]$ is isotropic and incoherent \cite{RIPless}, therefore an arbitrary fixed sparse vector $\mathbf{x}$ can be reconstructed from compressive measurements with high probability \cite{RIPless}. First, we review the definitions of the isotropy and incoherence properties, then state our result.
\begin{definition}(Isotropy \cite{RIPless}) If the vector $\mathbf{g}$ denotes a row of a random matrix $\bG$ drawn from a probability distribution $F$, then $F$ is said to satisfy the isotropy property if
\begin{equation}\label{eq:isotropy}
	\mathbb{E}[\mathbf{g}^H\mathbf{g}]=\mathbf{I},
	\end{equation}
where $\mathbb{E}[.]$ denotes the expectation and $\mathbf{I}$ the identity matrix.
\end{definition}
\begin{definition}
(Incoherence \cite{RIPless}) The distribution $F$ of $\bg = [g_n]\in\mathbb{C}^N$, is said to be incoherent with incoherence parameter $\mu(F)$ if
	\begin{equation}\label{eq:incoherence}
	\underset{n=1,2,...,N}{\mbox{max}}|g_n|^2\leq \mu,
	\end{equation}
where $\mu$ is the smallest number for which (\ref{eq:incoherence}) is satisfied. 
\end{definition}
The smaller the incoherence parameter $\mu$, the less the number of measurements $M$ required for (\ref{eq:BP}) to yield successful reconstruction \cite[Theorem 1.1]{RIPless}. It was also shown in that denoising algorithms such as LASSO \cite{LASSO} and the Dantzig \cite{Dantzig} selector yield stable recovery from noisy measurements under the isotropy and incoherence properties of the sensing matrix (Theorems 1.2 and 1.3 in \cite{RIPless}).

Henceforth, we refer to the matrix $\mathbf{A}$ as \emph{isotropic and incoherent} if the distribution $F$ of its rows (specified by the generalized delay parameter) obeys the isotropy and incoherence properties in (\ref{eq:isotropy}) and (\ref{eq:incoherence}). We can readily state the following lemma which establishes sufficient conditions for successful reconstruction from interferometric measurements based on the generalized interferometry framework. 

\begin{lemma}\label{The: CS ideal performance}
	Suppose $M$ interferometric measurements are acquired by selecting the generalized delay parameters $\alpha_m, m=1,2,\ldots,M$, of $\bA$ in (\ref{eq:linear_generalizedinterf})  from a uniform distribution ${\cal U}[0,2\pi]$. If $M\geq 2 L_0(1+\beta) s\log (2 N)$ for a positive constant $L_0$ and any $\beta > 0$, the $\ell_1$-norm minimization in (\ref{eq:BP}) yields the $s$-sparse vector $\mathbf{x}\in\mathbb{R}^N$ from the normalized measurements $\mathbf{y}=\hat{\mathbf{A}}\mathbf{x}$ with probability at least $1-\frac{5}{2N}-e^{-\beta}$.
\end{lemma}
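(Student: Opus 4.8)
The plan is to invoke the general recovery theorem for isotropic and incoherent measurement ensembles (Theorem 1.1 in \cite{RIPless}) and verify that the random ensemble generated by drawing $\alpha \sim {\cal U}[0,2\pi]$ satisfies both hypotheses, then read off the sample complexity and the failure probability. So the proof reduces to two verifications: the \emph{isotropy} property \eqref{eq:isotropy} and the \emph{incoherence} property \eqref{eq:incoherence} for the normalized row distribution $F$ of $\hat{\bA}$.

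First I would verify isotropy. A row of $\hat{\bA}$ is $\hat{\bg} = \sqrt{2/M}\,[\cos(\alpha), \cos(2\alpha),\dots,\cos(N\alpha) \mid -\sin(\alpha),\dots,-\sin(N\alpha)]$ (in the modal-analysis case only the cosine half appears, which is why $2N$ becomes $2N$ or $N$ accordingly; I will handle the general $2N$ case and note the $N$ case). Writing $\bg$ for the unnormalized row, I need $\mathbb{E}[\bg^{\mathrm H}\bg]$ proportional to $\bI$. The diagonal entries are $\mathbb{E}[\cos^2(n\alpha)] = 1/2$ and $\mathbb{E}[\sin^2(n\alpha)] = 1/2$; the off-diagonal entries involve $\mathbb{E}[\cos(n\alpha)\cos(m\alpha)]$, $\mathbb{E}[\sin(n\alpha)\sin(m\alpha)]$ for $n\neq m$, and $\mathbb{E}[\cos(n\alpha)\sin(m\alpha)]$ for all $n,m$, each of which vanishes by orthogonality of distinct harmonics over a full period (product-to-sum identities reduce everything to $\mathbb{E}[\cos(k\alpha)]=\mathbb{E}[\sin(k\alpha)]=0$ for $k\neq 0$). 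Hence $\mathbb{E}[\bg^{\mathrm H}\bg] = \tfrac12 \bI_{2N}$, so the scaling $\sqrt{2/M}$ together with the implicit normalization in \cite{RIPless} (which works with $\tfrac1M\sum$) gives exactly $\mathbb{E}[\hat{\bg}^{\mathrm H}\hat{\bg}\cdot M] = \bI$; this is precisely the statement $\mathbb{E}\|\hat{\bA}\bx\|^2 = \|\bx\|^2$ already used in the proof of Theorem~\ref{th:RIP_nonasymptotic_general}, so isotropy is essentially free.

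Next I would verify incoherence. For the \emph{unnormalized} coordinates, $\max_n |\cos(n\alpha)|^2 \le 1$ and $\max_n|\sin(n\alpha)|^2 \le 1$ deterministically, so with the $\sqrt{2/M}$ normalization folded into the $\tfrac1M$ convention of \cite{RIPless} the incoherence parameter is a universal constant — one may take $\mu = 2$ (or $\mu=1$ after absorbing constants). The key point is that $\mu$ does \emph{not} grow with $N$, which is what makes the sample complexity $M \gtrsim \mu\, s \log(2N)$ collapse to the stated $M \ge 2L_0(1+\beta)s\log(2N)$. Plugging $\mu = O(1)$ and the isotropy property into \cite[Theorem~1.1]{RIPless} with the standard choice of the oversampling parameter $\beta$ then yields exact recovery of the fixed $s$-sparse $\bx$ via \eqref{eq:BP} with probability at least $1 - \tfrac{5}{2N} - e^{-\beta}$, which is verbatim the conclusion of \cite[Theorem~1.1]{RIPless} specialized to ambient dimension $2N$ (or $N$). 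I would state the constant $L_0$ as the absolute constant from that theorem.

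The main obstacle — really the only subtle point — is bookkeeping the normalization conventions so that the isotropy identity $\mathbb{E}[\bg^{\mathrm H}\bg] = \bI$ holds with the $\sqrt{2/M}$ factor exactly as \cite{RIPless} requires, and making sure the incoherence bound is stated for the \emph{same} normalization (the factor $2$ in $M \ge 2L_0(1+\beta)s\log(2N)$ is precisely this $\mu$). A secondary point worth a sentence is that \cite[Theorem~1.1]{RIPless} requires $\bx$ fixed in advance (nonadaptive), not uniform over all sparse vectors simultaneously — this is exactly why the statement says ``the $s$-sparse vector $\mathbf{x}$'' rather than ``all $s$-sparse vectors,'' distinguishing this lemma from the RIP-based Theorem~\ref{th:RIP_nonasymptotic_general}. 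Everything else is a direct citation.
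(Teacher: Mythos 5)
Your proposal is correct and follows essentially the same route as the paper's own proof: verify that $\sqrt{M}\,\hat{\mathbf{A}}$ is isotropic and incoherent (with $\mu=2$) when $\alpha_m\sim{\cal U}[0,2\pi]$, then invoke Theorem~1.1 of \cite{RIPless} to read off the sample complexity $M\geq 2L_0(1+\beta)s\log(2N)$ and the failure probability $\tfrac{5}{2N}+e^{-\beta}$. The only difference is that you spell out the second-moment/orthogonality computation behind isotropy and the fixed-signal (nonuniform) nature of the guarantee, details the paper asserts without elaboration.
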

\begin{proof}
{\color{black} Based on \cite[Theorem 1.1]{RIPless}, we only need to show that $\sqrt{M}\, \hat{\mathbf{A}}$ is incoherent and isotropic under the conditions in the statement of Lemma \ref{The: CS ideal performance}.

It is easy to see that for the matrix $\hat{\mathbf{A}}$, $\max_{n=1,2,\ldots,2N}|\hat{a}_{m,n}|^2\leq2/M$, as the cosine and sine terms are bounded below and above by $-1$ and $1$, respectively. So, the matrix $\sqrt{M}\,\hat{\mathbf{A}}$ is incoherent with parameter $\mu=2$. Also, if $\alpha_m\sim{\cal U}[0,2\pi]$, then $\mathbb{E}[\hat{\mathbf{a}}_m^H\hat{\mathbf{a}}_m]=(1/M)\,\mathbf{I}$, therefore $\sqrt{M}\,\hat{\mathbf{A}}$ is isotropic.} Accordingly, Lemma \ref{The: CS ideal performance} follows from \cite[Theorem 1.1]{RIPless}.
\end{proof}

We also consider the noisy case,
\begin{equation}\label{eq:csnoise}
\mathbf{y}=\hat{\mathbf{A}}\mathbf{x}+\mathbf{z},
\end{equation}
where $\mathbf{z}\sim\mathcal{N}(0,\sigma^2I)$. A sufficient condition on the number of measurements for successful reconstruction in presence of noise is stated next.

\begin{lemma}\label{The: CS noisy performance}
	
Consider the same setting in the statement of Lemma \ref{The: CS ideal performance}. For any $\beta > 0$,  if the number of noisy measurements $M\geq L_0.(1+\beta).2.s\log(2N)$, then the
LASSO algorithm \cite{LASSO} with parameter $\lambda_{\text{LASSO}}=10\sqrt{\log(2N)}$ yields a vector $\mathbf{\bar{x}}$ satisfying
	\begin{equation}\label{eq:noisyCS_bound}
		\|\mathbf{\bar{x}-x}\|_2\leq \min_{1\leq\bar{s}\leq s}\zeta(\bar{s}),
	\end{equation}
	where,
	\begin{equation}\label{eq:perfc_bound}
	\zeta(\bar{s})\!\triangleq\!L(1+\gamma)\Big(\frac{\|\mathbf{x-x}_{\bar{s}}\|_1}{\sqrt{\bar{s}}}+\sigma\sqrt{\bar{s}\log(2N)}\Big)\\
	\end{equation}
	with probability at least $1-6/(2N)-6e^{-\beta}$, where $L$ is a positive constant and $\gamma=\sqrt{\frac{(1+\beta)2\bar{s}\log(2N)\log{M}\log^2{\bar{s}}}{M}}$.
\end{lemma}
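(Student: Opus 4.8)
The plan is to reduce the statement directly to the noise-robust part of the RIPless theory of \cite{RIPless}, in exactly the way Lemma~\ref{The: CS ideal performance} reduced the noiseless case to \cite[Theorem~1.1]{RIPless}. The crucial observation is that the hypotheses here are \emph{identical} to those of Lemma~\ref{The: CS ideal performance}: the generalized delays $\alpha_m$ are again i.i.d.\ $\mathcal{U}[0,2\pi]$, and $\hat{\mathbf{A}}=\sqrt{2/M}\,\mathbf{A}$ is the same normalized matrix. Consequently the two structural facts already verified in the proof of Lemma~\ref{The: CS ideal performance} carry over verbatim: (i) $\sqrt{M}\,\hat{\mathbf{A}}$ is isotropic, since $\mathbb{E}[\hat{\mathbf{a}}_m^{\mathrm H}\hat{\mathbf{a}}_m]=(1/M)\mathbf{I}$ using $\mathbb{E}[\cos(n\alpha)\cos(n'\alpha)]=\mathbb{E}[\sin(n\alpha)\sin(n'\alpha)]=\tfrac12\delta_{nn'}$ and $\mathbb{E}[\cos(n\alpha)\sin(n'\alpha)]=0$ for $\alpha\sim\mathcal{U}[0,2\pi]$; and (ii) $\sqrt{M}\,\hat{\mathbf{A}}$ is incoherent with parameter $\mu=2$, since every entry of $\hat{\mathbf{A}}$ is $\pm\cos$ or $\pm\sin$ scaled by $\sqrt{2/M}$, so its squared magnitude never exceeds $2/M$. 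These are precisely the two conditions \eqref{eq:isotropy}--\eqref{eq:incoherence} needed to invoke the stability results of \cite{RIPless}.

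With isotropy and incoherence established, I would apply the Gaussian-noise stability statement of \cite{RIPless} (the LASSO bound among Theorems~1.2--1.3 there). That result asserts, for an isotropic, $\mu$-incoherent ensemble with ambient dimension $n$ and $m$ measurements $\mathbf{y}=\mathbf{A}\mathbf{x}+\mathbf{z}$, $\mathbf{z}\sim\mathcal{N}(0,\sigma^2\mathbf{I})$, the following: if $m\ge C_0(1+\beta)\,\mu\,s\log n$, then LASSO with $\lambda=10\sqrt{\log n}$ returns $\bar{\mathbf{x}}$ obeying $\|\bar{\mathbf{x}}-\mathbf{x}\|_2\le \min_{1\le\bar s\le s} C(1+\gamma_0)\bigl(\|\mathbf{x}-\mathbf{x}_{\bar s}\|_1/\sqrt{\bar s}+\sigma\sqrt{\bar s\log n}\,\bigr)$ with probability at least $1-6/n-6e^{-\beta}$, where $\gamma_0=\sqrt{(1+\beta)\,\mu\,\bar s\log n\,\log m\,\log^2\bar s\,/\,m}$. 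Specializing to the present setting is just the substitution $n\leftarrow 2N$, $\mu\leftarrow 2$, $m\leftarrow M$: the sampling condition becomes $M\ge L_0(1+\beta)\cdot 2\cdot s\log(2N)$, the regularization parameter becomes $\lambda_{\text{LASSO}}=10\sqrt{\log(2N)}$, the correction factor becomes exactly the $\gamma$ displayed in (\ref{eq:perfc_bound}), the error bound becomes $\zeta(\bar s)$ of (\ref{eq:perfc_bound}), and the success probability becomes $1-6/(2N)-6e^{-\beta}$.

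I do not expect any genuinely new probabilistic estimate here; the whole content is bookkeeping. The one point requiring care is the normalization: $\hat{\mathbf{A}}=\sqrt{2/M}\,\mathbf{A}$ is precisely the choice for which the row ensemble is isotropic, and the resulting incoherence constant $\mu=2$ is exactly what produces the extra factor of $2$ appearing both in the sampling condition $M\ge L_0(1+\beta)\cdot 2\cdot s\log(2N)$ and inside $\gamma$. The only other thing to get right is citing the correct member of Theorems~1.2--1.3 of \cite{RIPless}: the Dantzig-selector bound has the same algebraic shape, so one must make sure the quoted absolute constants $L_0$, $L$, the parameter $\lambda=10\sqrt{\log n}$, the probability $1-6/n-6e^{-\beta}$, and the Gaussian (rather than merely bounded) noise model all come from the LASSO statement. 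Once this matching is made, Lemma~\ref{The: CS noisy performance} is immediate.
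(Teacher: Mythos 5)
Your proposal matches the paper's own argument: the paper likewise proves this lemma by carrying over the isotropy and incoherence of $\sqrt{M}\,\hat{\mathbf{A}}$ established for Lemma \ref{The: CS ideal performance} and then invoking Theorems 1.2 and 1.3 of \cite{RIPless} with the substitutions $n\leftarrow 2N$, $\mu\leftarrow 2$, $m\leftarrow M$. Your write-up is in fact more explicit than the paper's one-sentence proof, and the bookkeeping you describe is exactly what is intended.
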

\begin{proof}
	Similar to the proof of Lemma \ref{The: CS ideal performance}, Lemma \ref{The: CS noisy performance} follows from the incoherence and isotropy of $\sqrt{M}\,\hat{\mathbf{A}}$ and the results of Theorem 1.2 and 1.3 in \cite{RIPless}.
\end{proof}
Instead of LASSO, we can use the Dantzig selector \cite{Dantzig} to recover the spare vector in noise. In this case, under the same conditions of Lemma \ref{The: CS noisy performance}, the performance bound provided for the reconstruction error in (\ref{eq:perfc_bound}) is still valid by replacing $\gamma$ with $\gamma^2$ \cite{candesRIPless}.

\section{simulation and experimental results}\label{sec:results}
{\color{black} In this section, we study three different examples to evaluate the performance of the proposed approach to signal reconstruction from interferometric measurements under sensing constraints. In the first example, we consider optical modal analysis for Hermite Gaussian beams. This is an example of the generalized interferometry framework introduced in Section \ref{sec:generalmodel} in the spatial degree of freedom, and of compressive reconstruction under sensing constraints. In the second example, we reconstruct a layered sample placed in one arm of the interferometer using TD-OCT. This complements the former modal analysis example in which the information of interest pertains to the input signal, yet both examples are studied within the same unifying basis analysis framework. Lastly, we present an example of multi-dimensional interferometry using two-dimensional HG signals with two spatial degrees of freedom as per Remark \ref{rem:multidim}.} 

\subsection{Optical modal analysis}\label{ssec:modal_exp}
As discussed earlier, the HG modes are the eigenfunctions of an frFT of order $\alpha$ with eigenvalues $e^{-jn\alpha},\,\,n=1,2,...,N$. To analyze an optical beam in a Hilbert space spanned by the HG modes, we collect $M$ interferometric measurements by selecting $M$ different frFT orders $\alpha_m,\,\,m=1,2,...,M$, then apply a CS reconstruction method to reveal the modal content of the beam (the basis coefficients). In this case, $\by = \bA\bx$, where the $M\!\times\!N$ matrix  $\mathbf{A}=[\cos(n\alpha_m)]$ and $\mathbf{x}^{\mathrm{T}}=[|c_1|^2 |c_2|^2...|c_N|^2]$ is s-sparse. The frFT orders $\alpha_m,\,\,\,m=1,2,...,M$, specifying the rows of $\bA$ are i.i.d. and drawn from a uniform distribution $\mathcal{U}[0,2\pi]$, thus $\bA$ is isotropic and incoherent. The possible number of modes is set to $N=64$, and the number of active modes is $s=4$. In presence of noise, $\by = \bA\bx+\bz$, where the noise $\mathbf{z}\sim\mathcal{N}(0,\sigma^2\mathbf{I})$ is white and Gaussian, and $\mbox{SNR}\triangleq10\log(\frac{\bx^\text{H} \mathbb{E}[\bA^{\text{H}}\bA]\bx}{\sigma^2})$, where $\mathbb{E}[.]$ denotes the expectation over the distribution of $\alpha_m$ parametrizing $\bA$. To evaluate the quality of reconstruction, the scaled recovery error is defined as $e\triangleq\frac{\|\mathbf{x}-\bar{\mathbf{x}}\|_2^2}{\|\mathbf{x}\|_2^2}$. We use the BP and the Dantzig selector algorithms to reconstruct the modal coefficients in noise-free and noisy environments, respectively.

\subsubsection{Results with ideal frFTs}\label{ssec:ideal}
We first assume an ideal implementation for the frFT of different orders. Accordingly, for order $\alpha_m$, the output beam is $\psi_1(x;\alpha_m)=\sum_{n=1}^{N}c_n\phi_n(x)e^{-jn\alpha_m}$ as in (\ref{eq:filter output}), where $\alpha_m, \,\,m=1,2,...,M$ are selected independently and identically from the uniform distribution $\mathcal{U}[0,2\pi]$. 

Based on the generalized framework introduced in Section \ref{sec:generalmodel} which works for arbitrary basis, an FT of the interferometric measurements can be used to recover the modal energies of the input beam. Since the largest mode order is $N$, sampling uniformly at the Nyquist rate amounts to collecting $2N=128$ measurements by selecting the orders of the frFT uniformly and deterministically between $0$ and $2\pi$. In this case,
\begin{equation}
\bar{\mathbf{x}}=|\mathbf{F}\mathbf{y}|,
\end{equation}
where $\mathbf{F}$ is a $2N\!\times\!2N$ DFT matrix. While in the FT approach $M=2N=128$ interferometric measurements are needed for successful recovery, Fig. \ref{fig:error_ideal} shows that the modal content can be retrieved with significantly less measurements if we use the CS approach. Despite the constrained structure of $\bA$, from only $M=25$ measurements the CS approach yields reconstruction performance comparable to that of FT whilst achieving substantial savings in data acquisition time. 

\begin{figure*}[htb]
	\centering
	\includegraphics[scale=.9]{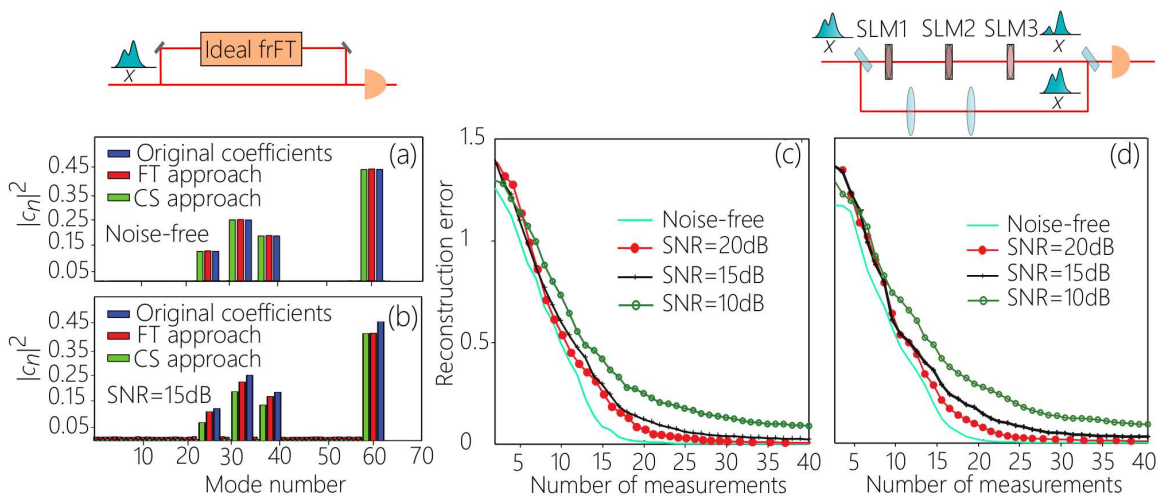}
	\vspace{-.2cm}
	\vspace{-3pt}\caption{(a) Comparing the CS approach performance with $M=25$ measurements collected using an ideal frFT to that of the FT approach with $2N=128$ measurements in noise-free environment. (b)  SNR$=15$dB, and ideal frFT. (c) Evaluating the performance of the CS approach in terms of the reconstruction error as function of the number of interferometric measurements for different SNRs. The frFT filters are assumed to be ideal, and $N=64$, $s=4$. (d) Evaluating the performance of the CS approach in terms of reconstruction error versus the number of interferometric measurements for different SNRs. The frFT filters are simulated using ideal SLMs acting as quadratic phase operators, and $N=64$, $s=4$.}
	\label{fig:error_ideal}
\end{figure*}

We investigate the recovery/reconstruction performance by calculating the reconstruction error for a different number of measurements in both noise-free and noisy settings. Fig. \ref{fig:error_ideal}(c) shows the decay of the reconstruction error with $M$ using the compressive approach. The curves are obtained by averaging the error over $100$ runs. The sparse vector is held fixed across the different runs, but the rows of the sensing matrix are randomly generated from the uniform distribution giving a different interferogram for every run.

\begin{figure}[htb]
	\centering
	\includegraphics[scale=.9]{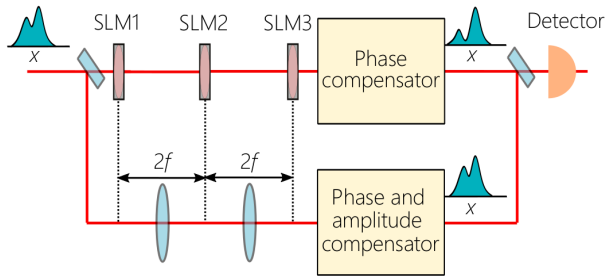}
	\vspace{-.4cm}
	\caption{\textcolor{black}{Schematic of an frFT filter implemented using SLMs that act as quadratic phase operators.}}
	\label{fig:frFT_schematic}\vspace{-10pt}
\end{figure}
\subsubsection{Results with the actual implementation of the frFT}\label{ssec:simfrFT}
Figure \ref{fig:frFT_schematic} shows an interferometer with a practical optical setup implementing an frFT filter. The setup consists of three Spatial Light Modulators (SLMs), which suitably modulate the phase of the optical field along the transverse direction. The SLMs act as quadratic phase operators -- and this is how they are modeled in our simulations -- depending on $\alpha_m$, as $e^{\frac{-j\pi p_{\ell}(\alpha_m)x^2}{2\lambda}}$, where $\ell=1,2,3$, is the index of the SLM, and the phase function $p_{\ell}(\alpha_m)$ is defined by the generalized delay parameter and the characteristics of the input beam. \textcolor{black}{In appendix}, we show how this system actually implements an frFT using properly chosen $p_{\ell}(\alpha_m), \ell=1,2,3$. 

Figure \ref{fig:error_ideal}(d) displays the simulated reconstruction error versus the number of measurements $M$ for the CS approach using the frFT system shown in the schematic of Fig. \ref{fig:frFT_schematic}. {\color{black}The performance is fairly close to the ideal case demonstrating that the three-SLM setup accurately models an frFT.}

\subsubsection{Experimental results}\label{experimental}
Here, we report on results from an actual laboratory experiment implementing the frFT filter of Fig. \ref{fig:frFT_schematic}.  Producing exact HG modes is practically infeasible. Instead, we obtain approximate modes by flipping a Gaussian beam at the crossing points (see the approximate modes of the incident beam in the insets of Fig. \ref{fig:experiment_coef}). Obviously, such beams are not perfectly orthogonal, hence will have non-vanishing mutual projections. As such, even if a single mode is active, there will be non-zero coefficients for the adjacent modes. 

In Fig. \ref{fig:experiment_coef}, we compare the performance of the CS approach to that of FT for an optical beam consisting of HG$_1$ and another of HG$_2$, where HG$_1$ and HG$_2$ are the first and second Hermite-Gaussian modes, respectively. The FT approach uses $128$ interferometric measurements collected uniformly by choosing the generalized delays between $0$ to $2\pi$. In the CS approach, only $M=25$ random measurements are used.

\begin{figure}[htb]
	\centering
	\includegraphics[scale=.85]{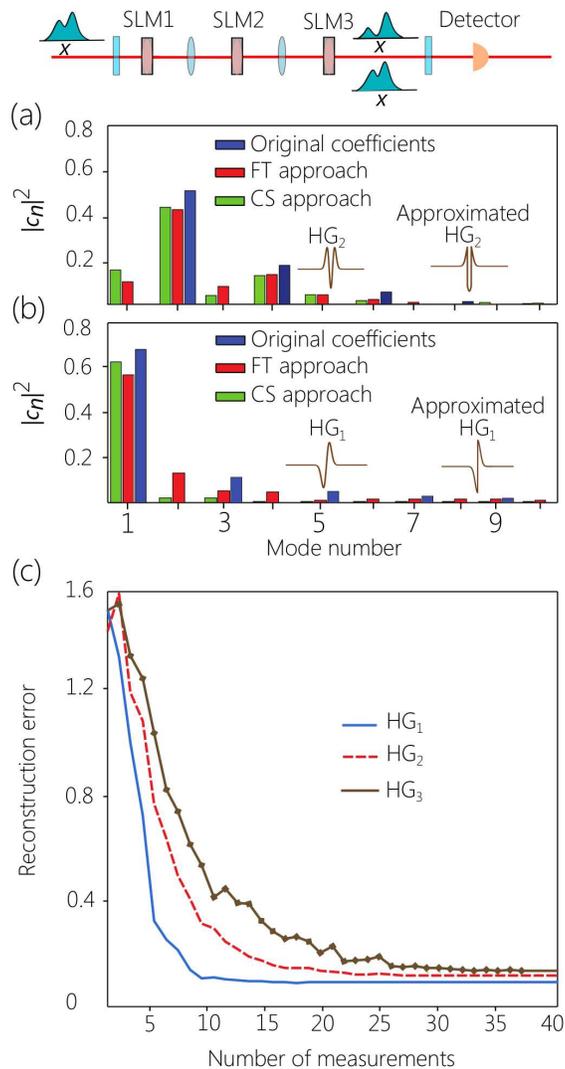}
\vspace{-10pt}	\caption{Comparing the reconstruction performance of the CS approach to that of the FT from experimental measurements. (a) Using approximate HG$_1$ mode. (b) Using approximate HG$_2$ mode. (c) Evaluating the performance of the CS approach in the experiment using approximate HG$_1$, HG$_2$, and HG$_3$ in terms of reconstruction error versus the number of interferometric measurements for different values of SNR. The frFT is implemented using real optical devices with various physical constraints.}
	\label{fig:experiment_coef}
\end{figure}

We also investigate the reconstruction error based on the experimental results. As shown in Fig. \ref{fig:experiment_coef}, efficient reconstruction requires about $M=25$ measurements. This corresponds to $25$ settings of the frFT order for the CS approach versus $128$ for FT. 

{\color{black}
\subsection{Information recovery in TD-OCT}\label{ssec:OCT_exp}
Here, we present an example of TD-OCT in which we seek to recover the reflectivity and depth information of $L$ different layers of a sample object within our unifying basis analysis framework. The depth information of a given layer with respect to the sample surface is characterized in terms of the round-trip time of the optical field from the layer to the surface of the sample. As described in Section \ref{ssec:applications}, the desired information here is in the basis coefficients $d_n,\,\,\,n=1,2,...,N$, which can be retrieved by solving the system of linear equations in (\ref{eq:interferogram_linearmodel}). 

In this experiment, we first consider a sample object with $L=10$ layers. By solving  (\ref{eq:interferogram_linearmodel}), we reconstruct the $20\!\times\!1$ vector $\mathbf{x}$ depicted in Fig. \ref{fig:OCT_nocompression}(a), which is shown to match the ground truth. Subsequently, the reflectivity of the layers and their depths are correctly reconstructed from the retrieved coefficients $d_n$ as displayed in Fig. \ref{fig:OCT_nocompression}(b).}
\begin{figure}[b]
	\centering
	\includegraphics[scale=1]{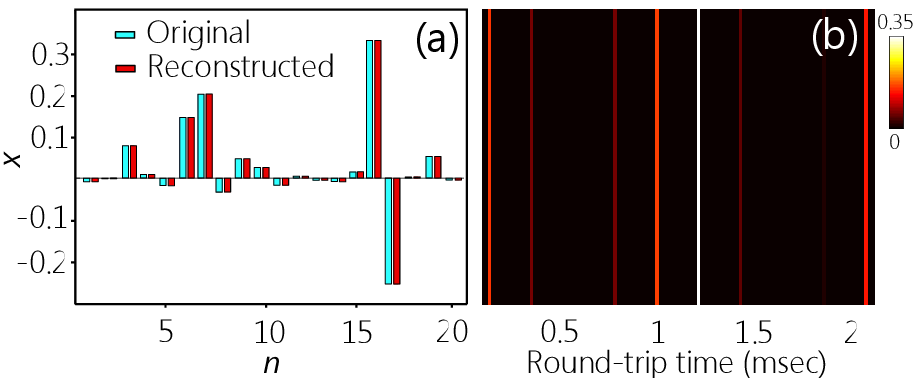}
\vspace{-.65cm}	\caption{{\color{black} Reconstructing the depth information of a sample object in TD-OCT using the basis analysis framework. (a) Entries of the reconstructed vector $\mathbf{x}$ in (\ref{eq:interferogram_linearmodel}). (b) Reconstructed reflectivity and depth information for the layers of the sample object. The depth information is characterized in terms of the round-trip time from the layer to the sample surface.}}
\label{fig:OCT_nocompression}
\end{figure}

{\color{black} We consider a second example of OCT where the sample object has $L=100$ layers among which only $s=5$ unknown layers have non-zero reflectivity. The sparsity of the vector of reflectivity indices enables recovery from few measurements. The reflectivity coefficients are successfully retrieved using Basis Pursuit and the Dantzig selector from $M=60$ interferometric measurements as shown in Fig. \ref{fig:OCT_compression}(a) and (b) for noise-free and noisy settings ($\mathrm{SNR} = 20\:\mathrm{dB}$).
\begin{figure}[htb]
	\centering
	\includegraphics[scale=1]{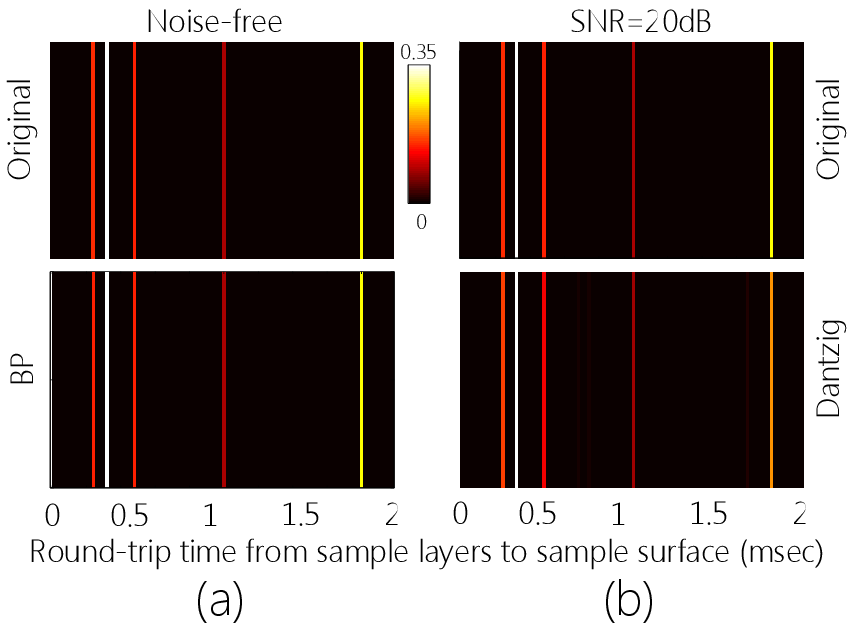}
\vspace{-10pt}	\caption{ {\color{black} Reconstructing the reflectivity indices of a sample object with $L=100$ layers and $s=5$ reflective layers from compressed interferometric measurements using BP and Dantzig selector. (a) Noise-free case. (b) Noisy setting with $\mathrm{SNR}= 20\:\mathrm{dB}$.}}
\label{fig:OCT_compression}
\end{figure}

{\color{black}
\subsection{Multi-dimensional interferometry}
\label{sec:multidim_example}
In our last example, we reconstruct a two-dimensional signal described by two spatial degrees of freedom from interferometric measurements collected by sampling two generalized delay parameters. The signal $E(x,y)$ with degrees of freedom $x$ and $y$ lies in a Hilbert space spanned by a 2D HG basis. The signal can be expressed as $E(x,y)=\sum_{nm}c_{nm} \phi_n(x)\eta_m(y)$, where $\{\phi_n(x)\}$ and $\{\eta_m(y)\}$ are the two sets of HG basis elements and $c_{nm}$ the expansion coefficients. We implement two generalized delays, namely two cascaded frFT systems of orders $\alpha_1$ and $\alpha_2$. For $N=100$ basis elements, We examine the performance of our approach in reconstructing signals formed by the superposition of a small number $s$ of basis elements. Rows (a) and (b) of Fig. \ref{fig:multi_exp} display the 2D signals and the reconstructed coefficient(s) for $s=1$ and $s=4$, respectively. Our approach is shown to yield accurate reconstruction of the expansion coefficients from a small number of interferometric measurements $M=50$, a saving of $75\%$ in sample complexity compared to directly taking a FT of the resulting interferogram. Most notably, this example underscores the ability of the proposed approach to handle spatially-multiplexed signals commonly used, for example, in high-speed communications.}

\begin{figure}[htb]
	\centering
	\includegraphics[scale=1]{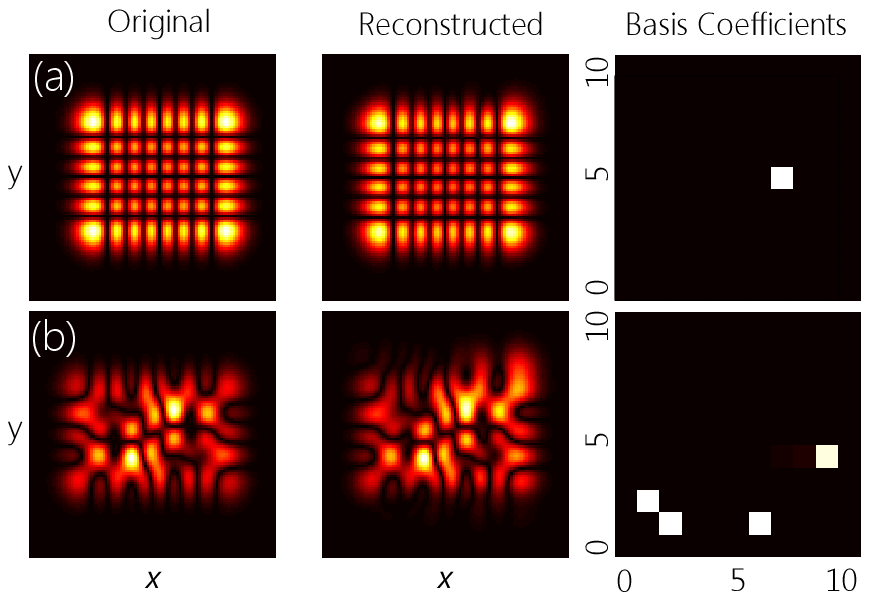}
\vspace{-10pt}	\caption{{\color{black} Reconstruction of 2D HG signals (described by two spatial degrees of freedom) from compressive interferometric measurements. The measurements are collected by sweeping the orders $\alpha_1$ and $\alpha_2$ of two cascaded frFTs for which the 2D HG signals are eigenfunctions. (a) (left) Original signal HG$_{57}=\phi_5(x)\eta_7(y)$ i.e, $(m,n) = (5,7)$, (middle) reconstructed signal and (right) reconstructed coefficient $c_{57}$. (b) (left) Original 2D signal formed by the superposition of the, $s=4$, 2D HG basis elements $(1,2), (2,1), (6,1), (9,4)$, (middle) reconstructed signal, and (right) reconstructed coefficients.}}
\label{fig:multi_exp}
\end{figure}

\textcolor{black}{
\section{Conclusion}\label{sec:conclusion}
We proposed a unifying framework for signal reconstruction from interferometric measurements. In the proposed framework, the interferometry problem amounts to Hilbert space analysis in which the information of interest resides in the expansion coefficients in a linear measurement model. This framework was generalized to arbitrary degrees of freedom enabling signal recovery in various domains beyond the time domain. Underpinning this generalization is the notion of generalized delays, namely unitary transformations for which the basis elements are eigenfunctions. The proposed framework enabled compressive acquisition and reconstruction, most notably with no modification to the underlying interferometer structure. We established performance guarantees for successful signal recovery from reduced interferometric measurements even with the constrained sensing structure set by the interferometer obviating the need for non-native components. This was validated using both synthetic and experimental data in the context of optical modal analysis, as well as examples of TD-OCT and multi-dimensional interferometry. Studying the effect of other physical sensing constraints arising from hardware limitations on signal reconstruction (e.g., limited optical aperture, clipping effects, pixelation of SLMs, etc), developing recovery algorithms under such constraints, and establishing performance guarantees thereof are subject of our ongoing investigations and will be published in subsequent work.}  

\textcolor{black}{\appendix
\noindent\textbf{Implementation of the frFT:} We show that the system in the reference arm of the interferometer depicted in Fig. \ref{fig:frFT_schematic} provides an implementation of the fractional Fourier transform (frFT). For this purpose, we model the free space propagation between the SLMs for distance $2f$, where $f$ is the focal length, with a Fresnel (diffraction) integral with kernel,
\begin{equation}\label{Fresnel_kernel}
k_{\text{Fresnel}}(x,x')=\frac{e^{j4\pi f/\lambda}}{\sqrt{j\lambda 2f}}e^{\frac{j\pi}{\lambda 2f}(x'-x)^2},
\end{equation}
where $\lambda$ is the wave length. \textcolor{black}{Higher order HG modes are obtained by multiplying the Gaussian beam $\phi_0(x)=\exp(\frac{-\pi x^2}{2\sigma^2})$ with scaled Hermite polynomials, $H_n(\frac{\sqrt{2\pi}x}{\sigma})$, where $x$ is the spatial variable in the transverse plane, $H_0(x)=1$, $H_1(x)=2x$, $H_2(x)=4x^2-2$, and $H_3(x)=8x^3-12x$, etc. \cite{Ozaktas01wiley}.}
\textcolor{black}{The first and third SLMs are modeled with multiplicative kernel $e^{-j\frac{\pi p_1 x^2}{2\lambda}}$, and the second SLM with kernel $e^{-j\frac{\pi p_2 x^2}{2\lambda}}$, where $p_1(\alpha)=\frac{-1}{f}(\csc(\alpha)+\cot(\alpha)-1)$, $p_2(\alpha)=\frac{-1}{f}(\sin(\alpha)-2)$, and $\alpha$ is the generalized delay parameter. It follows that the kernel of the overall system is
\begin{equation}\label{eq:wholesys_kernel}
\begin{split}
k_{\text{eq}}(x,u)=\!\!\frac{e^{j\zeta}}{\sqrt{2\lambda f}} \frac{1}{\sqrt{j\sin(\alpha)}}\! \cdot e^{\frac{j\pi}{2\lambda f}(x^2\cot(\alpha)+u^2\cot(\alpha)-2xu\csc(\alpha))}
	,
\end{split}
\end{equation}
where $\zeta=\frac{8\pi f}{\lambda}$.
Comparing the kernel in (\ref{eq:wholesys_kernel}) to the frFT kernel,
\begin{equation}
\begin{split}
k_{\text{frFT}}(x,u)=\frac{{\sqrt{1-j\cot(\alpha)}}}{\sqrt{2\lambda f}} e^{\frac{j\pi}{2\lambda f}(x^2\cot(\alpha)+u^2\cot(\alpha)-2xu\csc(\alpha))},
\end{split}
\end{equation}
we see that the system in the reference arm of Fig. \ref{fig:frFT_schematic} followed by an $\alpha$-dependent phase compensator block with phase $-\zeta+ \frac{\alpha}{2}$ provides an optical implementation of an frFT of order $\alpha$.}}

\bibliographystyle{myIEEEtran}
\bibliography{IEEEabrv,references_TSP1}

\end{document}